\numberwithin{equation}{section}
\newtheorem{lemma}{Lemma}
\newtheorem{proposition}{Proposition}
\newtheorem{theorem}{Theorem}
\newtheorem{remark}{Remark}
\newcommand{\bs}{\boldsymbol}
\newcommand{\mb}{\mathbb}
\newcommand{\mf}{\mathbf}
\newcommand{\mr}{\mathrm}
\newcommand{\bsb}{\begin{subequations}}
\newcommand{\esb}{\end{subequations}}
\newcommand{\ti}[1]{\tilde{#1}}
\newcommand{\bred}{\begin{color}{red}}
\newcommand{\ecl}{\end{color}}
\newcommand{\bblue}{\begin{color}{blue}}
\newcommand{\bgre}{\begin{color}{green}}
\newcommand{\bora}{\begin{color}{orange}}
\begin{document}
\title{\textbf{From the self-dual Yang-Mills equation \\ to the Fokas-Lenells equation}}

\author{Shangshuai Li$^{1,2,3}$
	~~Shuzhi Liu$^4$
	~~ Da-jun Zhang$^{1,2}$\footnote{Corresponding author.
		Email: djzhang@staff.shu.edu.cn}
	~~\\
	{\small $~^1$Department of Mathematics, Shanghai University, Shanghai 200444,    China} \\
	{\small $^{2}$Newtouch Center for Mathematics of Shanghai University,  Shanghai 200444, China}\\
	{\small $^{3}$Department of Applied Mathematics, Faculty of Science and Engineering, Waseda University,}\\
	{\small Tokyo 169-8555, Japan}\\
	{\small $^{4}$School of Statistics and Data Science, Ningbo University of Technology, Ningbo, 315211, China}
		}

\maketitle

\begin{abstract}
A reduction from the self-dual Yang-Mills (SDYM) equation to the unreduced
Fokas-Lenells (FL) system is described in this paper.
It has been known that the SDYM equation can be formulated from the Cauchy matrix schemes
of the matrix Kadomtsev-Petviashvili (KP) hierarchy
and the Ablowitz-Kaup-Newell-Segur (AKNS) hierarchy.
We show that the reduction can be realized in these two Cauchy matrix schemes, respectively.
Each scheme allows us to construct solutions for the unreduced FL system.
We prove that these solutions obtained from different schemes
are equivalent under certain reflection transformation of coordinates.
Using conjugate reduction we obtain solutions of the FL equation.
The paper adds an important example to Ward's conjecture on the reductions of the SDYM equation.
It also indicates the Cauchy matrix structures of the Kaup-Newell hierarchy.

\begin{description}
\item[Keywords:] self-dual Yang-Mills equation, Fokas-Lenells equation,
Cauchy matrix approach, Miura transformation, explicit solution
\end{description}
	
\end{abstract}

\section{Introduction}\label{sec-1}

The self-dual Yang-Mills (SDYM) equation \cite{Yang-1977}, also referred to as the Yang equation
in 4-dimensional Wess-Zumino-Witten model \cite{Donaldson-1985}, 
is an important motion equation in integrable system and twistor theory \cite{Mason-book}.
A  classical  form of the SDYM equation is expressed as
\begin{equation}\label{SDYM}
	\partial_{\ti z}((\partial_zJ)J^{-1})-\partial_{\ti w}((\partial_wJ)J^{-1})=0,
\end{equation}
where $J$ is a matrix function of $(z,\ti z,w,\ti w)\in\mb C^4$.
The SDYM equation allows many reductions towards lower-dimensional classical integrable equations.
For example, it can be reduced to  the Ablowitz-Kaup-Newell-Segur (AKNS) system (see section 2 in \cite{Abiowitz-2003}
for the Korteweg-de Vries (KdV), nonlinear Schr\"{o}dinger (NLS) and sine-Gordon equations),
the Painlev\'{e} equations \cite{Mason-1993}, the chiral field equations \cite{Manakov-1981} and so on.
More than that, there is a well-known conjecture proposed by Ward \cite{Ward-1985}:

\vspace{8pt}
\textit{
	... many (and perhaps all?) of the ordinary and partial differential equations that are regarded as
being integrable or solvable may be obtained from the self-dual gauge field equations (or its generalizations) by reduction.
}
\vspace{8pt}

The purpose of this paper is to build a connection between the SDYM equation and the Fokas-Lenells (FL) equation,
which reads
\begin{align}\label{FL-equation-2}
	u_{xt}-u-2\mathrm{i}|u|^2u_x=0,
\end{align}
where $u=u(x,t)$ is a complex-valued function, $|u|^2=uu^*$, $u^*$ is the conjugate of $u$,
variables $x,t$ are real coordinates,
and $\mathrm{i}$ is the pure imaginary unit.
It bears the names of Fokas and Lenells because Fokas derived it in 1995
from the Hamiltonian triplet of the NLS equation \cite{Fokas-1995}
and later Lenells and Fokas initiated the research of it in a sequence of papers
\cite{Lenells-2009-1,Lenells-2009-2,Lenells-2010}.
They have shown that the equation belongs to the Kaup-Newell hierarchy \cite{Lenells-2009-1}
and it is exactly a reduction of the first member of negative potential Kaup-Newell hierarchy \cite{Lenells-2009-2}:
\bsb\label{FL-unreduce}
\begin{align}
	\label{FL-unreduce-1}&u_{xt}-u-2\mr iuvu_x=0, \\
	\label{FL-unreduce-2}&v_{xt}-v+2\mr ivuv_x=0,
\end{align}
\esb
which we call the pKN$(-1)$ system  for short.
It is worth noting that the FL equation \eqref{FL-equation-2} is equivalent to the
two-dimensional massive Thirring model
and the pKN$(-1)$ system is also known as Mikhailov model.
In 1976 \cite{Mikhailov-1976},  Mikhailov found a Lax pair of the massive Thirring model
which is written in light-cone coordinates as
\begin{align*}
& \mu_x+\mathrm{i}\nu+\mathrm{i} |\nu|^2\mu=0,\\
& \nu_t+\mathrm{i}\mu+\mathrm{i} |\mu|^2 \nu=0,
\end{align*}
and which describes interaction of two states of a fermion.
For more details of the connection between the FL equation and the massive Thirring model,
one can refer to \cite{Kaup-1977,Gerdjikov-1980,Gerdjikov-1982}
or the Appendix A in \cite{Liu-2022}.
As an integrable system, the FL equation has attracted attentions
from various aspects,
for example,  the inverse scattering transform and Riemann-Hilbert method
\cite{Lenells-2009-1,Fan-2021,Ai-2019,Zhang-2023},
bilinear approach \cite{Matsuno-2012,Matsuno-2012-2,Liu-2020,Liu-2022},
Darboux transformation \cite{He-2012,Wang-2020,Ye-2023} and so on.

In this paper, we aim to reduce the SDYM equation to the FL equation.
We will introduce a constraint  such that we can get the pKN$(-1)$ system \eqref{FL-unreduce}
from a general SDYM equation.
Then, we will show that the constraint can be realized based on our recent research
of the SDYM equation using the Cauchy matrix approach.
Recently, we have successfully formulated the SDYM equation \eqref{SDYM}
with two types of Cauchy matrix structures \cite{LSS-2022,LSS-2023}.
The Cauchy matrix approach is a direct method that allows us to
construct integrable equations together with their Lax pairs and solutions with Cauchy matrix structure
through investigating the Sylvester-type equations.
This approach was first established by Nijhoff et al.
to study integrable quadrilateral (discrete) equations \cite{Nijhoff-2009}
in the Adler-Bobenko-Suris list \cite{Adler-2002}.
In their method (see section 2.1 in \cite{Nijhoff-2009}), they introduced a dressed Cauchy  matrix:
\begin{equation}
	\bs M\doteq(M_{j})_{i,j=1,\dots, N},~~~~M_{ij}\doteq \frac{\rho_ic_j}{k_i+k_j},
\end{equation}
where $k_i,c_i$ are constants, $\rho_i=\rho_i(n,m)$ is a discrete plane wave factor  defined as
\begin{align}
	\rho_i\doteq
	\left(\frac{p+k_i}{p-k_i}\right)^n
	\left(\frac{q+k_i}{q-k_i}\right)^m\rho_i^0,~~ (n,m)\in \mathbb{Z}^2,
\end{align}
and $\rho_i^0$ is a phase parameter independent of $(n,m)$.
It turns out that such a matrix $\bs M$ obey
the following Sylvester equation:
\begin{equation}\label{Syl-KdV}
	\bs K\bs M+\bs M\bs K=\bs r\bs c^T,
\end{equation}
where
\begin{equation}
	\bs K\doteq\mr{diag}(k_1,\cdots,k_N),~~~~\bs r\doteq(\rho_1,\cdots,\rho_N)^T,~~~~
	\bs c\doteq(c_1,\cdots,c_N)^T.
\end{equation}
Then a set of scalar functions $S^{(i,j)}\doteq \bs c^T\bs K^j(\bs I_N+\bs M)^{-1}\bs K^i\bs r$
are defined for $i,j\in \mathbb{Z}$,
which turns out to obey some shift and recursive relations.
Here $\bs I_N$ is the $N$-th order identity matrix.
One can also express $S^{(i,j)}$ as the ratio of determinants by
\begin{equation}
	S^{(i,j)}=\frac{g}{f},~~~~f=|\bs I_N+\bs M|,~~~~
	g=-
	\begin{vmatrix}
		\bs I_N+\bs M & \bs K^i\bs r \\
		\bs c^T\bs K^j & 0
	\end{vmatrix}.
\end{equation}
Equations can be derived as closed form of certain $S^{(i,j)}$,
which give rise to discrete integrable equations together with their solutions,
e.g. see \cite{Nijhoff-2009}.
Later, a generalized Cauchy matrix approach was proposed,
which allows $\bs K$ to be arbitrary invertible matrix \cite{Zhang-2013},
providing more flexibility in choosing parameters.
The obtained explicit solutions can be classified in terms of the canonical form of $\bs K$
(or eigenvalue structure of $\bs K$),
which describe interactions of $N$ solitons, resonance of multiple-pole solutions
and interaction between solitons and multiple-pole solutions.
Subsequently, this method was extended to  continuous integrable systems,
enabling the formulation of KdV equation, modified KdV equation, sine-Gordon equation \cite{Xu-2014}, Kadomtsev-Petviashvili (KP) equation \cite{Feng-2022} and $2\times2$ 
Ablowitz-Kaup-Newell-Segur (AKNS) system \cite{Zhao-2018}.
In the recent work \cite{LSS-2022} and \cite{LSS-2023}, we have shown that
the SDYM equation and its solutions can be formulated from the Cauchy matrix structures of the
matrix AKNS system and the matrix KP system.
These progresses will help us realize the reduction constraint (see, e.g. \eqref{decompose-JK})
using the Cauchy matrix structures.  As a result, we will also give a Cauchy-matrix formulation
of the FL equation, which has also emerged in  \cite{Matsuno-2012}.

The paper is organized as follows.
In section \ref{sec-2}, we recall the theory of the SDYM equation
and present dimensional reduction from a general SDYM equation to the pKN$(-1)$ system.
Then, in section \ref{sec-3} we show how the reduction is realized in the Cauchy matrix schemes
of the KP-type and AKNS-type.
Equivalence of these solutions obtained from the two Cauchy matrix schemes
are also discussed in this section.
Section \ref{sec-4} devotes to conjugate reductions such that
$N$-soliton solutions of the FL equation are obtained.
Concluding remarks are given in section \ref{sec-5}.
There are two appendixes. In appendix \ref{App-1}  multiple-pole solutions
in the KP-type Cauchy matrix scheme are constructed.
Appendix \ref{App-2} compares our solutions with Matsuno's solutions
obtained from bilinear approach and demonstrates their uniformity.

\section{From SDYM to pKN$(-1)$ }\label{sec-2}

In this section, first, we  briefly review  the construction towards SDYM equation.
One can refer to \cite{Mason-book,Hamanaka-2020,Huang-2021,LSS-2024}
for more details and descriptions about the theory of SDYM equation.
Then we apply suitable dimensional reduction and coordinate transformation to derive the pKN$(-1)$ system.

\subsection{Theory of the SDYM equation}\label{sec-2-1}

We start from introducing a metric matrix in $\mb C^4=(z_1,z_2,z_3,z_4)$, which is determined by
\begin{equation}\label{eta-mn}
	(\eta^{mn})_{4\times4}\doteq
	\begin{pmatrix}
		0 & 1 & 0 & 0 \\
		1 & 0 & 0 & 0 \\
		0 & 0 & 0 & -1 \\
		0 & 0 & -1 & 0
	\end{pmatrix},~~~~
	m,n=1,2,3,4.
\end{equation}
Let $G$ be a certain Lie group and $g$ be the Lie algebra of $G$.
The Yang-Mills field strengths are determined as follows:
\begin{equation}
	F_{ij}\doteq[\mathcal D_i,\mathcal D_j]=\partial_i A_j-\partial_j A_i+[A_i,A_j],~~~~
	\mathcal D_i\doteq\partial_i+A_i,~~~~i,j\in\{1,2,3,4\},~~~~i\neq j,
\end{equation}
where $[\cdot, ~\cdot]$ denotes the Lie bracket defined as $[A,B]=AB-BA$,
matrix functions $A_i\in g$ are gauge potentials,
operators $\mathcal D_i$ are the covariant derivatives and $\partial_i=\partial_{z_i}$.

The anti-self-dual condition of field strength is given by\footnote{
There is no intrinsic difference between self-dual condition with anti-self-dual condition.
They can be transformed to each other under the coordinate transformation
$(z_1,z_2,z_3,z_4)\rightarrow(z_1,z_2,z_4,z_3)$.}
\begin{equation}\label{ASD-condition}
	F_{ij}=-\frac{1}{2}\epsilon_{ijkl}\eta^{ka}\eta^{lb}F_{ab},
\end{equation}
where $\epsilon_{ijkl}$ is the Levi-Civita tensor, $\eta^{mn}$ follows the definition in \eqref{eta-mn},
and $i,j,k,l,a,b$ are indices running over $\{1,2,3,4\}$, the Einstein summation convention is used.
Denoting $(z,\ti z,w,\ti w)\doteq(z_1,z_2,z_3,z_4)$, one can  rewrite \eqref{ASD-condition} as the follows:
\begin{equation}\label{ASD-condition-expand}
	F_{zw}=0,~~~~F_{\ti z\ti w}=0,~~~~F_{z\ti z}-F_{w\ti w}=0,
\end{equation}
which indicates the existence of $h$ and $\ti h$ satisfying
\begin{equation}
	\mathcal D_zh=0,~~~~\mathcal D_wh=0,~~~~\mathcal D_{\ti z}\ti h=0,~~~~\mathcal D_{\ti w}\ti h=0.
\end{equation}
Defining $J=\ti h^{-1}h$, one can derive the $J$-matrix formulation of the SDYM equation:
\begin{equation}\label{SDYM-J}
	\partial_{\ti z}((\partial_zJ)J^{-1})-\partial_{\ti w}((\partial_wJ)J^{-1})=0.
\end{equation}
The SDYM equation is an integrable system, whose Lax representation is given by
(e.g. \cite{Nimmo-2000})
\bsb\label{SDYM-lax-pair}
\begin{align}
	L(\phi)\doteq (\partial_w-(\partial_wJ)J^{-1})\phi-(\partial_{\ti z}\phi)\zeta=0, \\
	M(\phi)\doteq (\partial_z-(\partial_zJ)J^{-1})\phi-(\partial_{\ti w}\phi)\zeta=0.
\end{align}
\esb
By introducing a Miura transformation
\begin{align}\label{Miura-JK}
	\partial_{\tilde z}K=-(\partial_wJ)J^{-1},~~~~\partial_{\tilde w}K=-(\partial_zJ)J^{-1},
\end{align}
the compatible condition of \eqref{SDYM-lax-pair} also gives rise to the $K$-matrix formulation:
\begin{align}\label{SDYM-K}
	\partial_z\partial_{\ti z}K-\partial_w\partial_{\ti w}K-[\partial_{\ti z}K,\partial_{\ti w}K]=0.
\end{align}

Obviously, the SDYM equation \eqref{SDYM-J} is also a result of the compatibility of \eqref{Miura-JK}.
In principle, the SDYM equation can be studied either in the form \eqref{SDYM-J} or \eqref{SDYM-K},
or in a more general form \eqref{Miura-JK}.
For convenience in the following, we call \eqref{Miura-JK} the general SDYM equation.

\subsection{Dimensional reduction to the pKN$(-1)$ system}\label{sec-2-2}

To reduce the SDYM equation to the pKN$(-1)$ system, let us introduce the following constraints:
\bsb\label{decompose-JK}
\begin{itemize}
\item{choose
\begin{align}
& G=\mr{SL}(2), \label{2.10a}\\
& (z,\ti z,w,\ti w)\in\mb R^4,~~ \mathrm{and}~~   \tilde w=w;
\end{align}
}
\item{assume that $J$ and $K$ have the following variable separation form
\begin{align}
	&J(z,\ti z,w)=\mf e^{-\sigma_3w}J'(z,\ti z)\mf e^{\sigma_3w}, \\
	&K(z,\ti z,w)=\mf e^{-\sigma_3w}K'(z,\ti z)\mf e^{\sigma_3w},
\end{align}
where $\sigma_3=\mr{diag}(1,-1)$ is the third Pauli matrix.
}
\end{itemize}
\esb

Note that the condition \eqref{2.10a} indicates that $|J|$ is a constant
(see, e.g. \cite{Yang-1977,Pohlmeyer-1980} or \cite{LSS-2024}),
therefore we can always normalize it such that
\begin{equation}\label{J=1}
|J|=1.
\end{equation}
The above decomposition for $J$ and $K$ immediately yields
\begin{align}\label{JK-w}
	\partial_wJ=[J,\sigma_3],~~~~\partial_wK=[K,\sigma_3].
\end{align}
Substituting them into the $J$-formulation \eqref{SDYM-J}, the  $K$-formulation \eqref{SDYM-K}
and the general form \eqref{Miura-JK},
we get the following 2-dimensional equations:
\begin{align}
	\label{reduce-J} & \partial_{\ti z}((\partial_zJ)J^{-1})-[[J,\sigma_3]J^{-1},\sigma_3]=0, \\
    \label{reduce-K} & \partial_z\partial_{\ti z}K-[[K,\sigma_3],\sigma_3]-[\partial_{\ti z}K,[K,\sigma_3]]=0,\\
    \label{Miura-pKN} & \partial_zJ=-[K,\sigma_3]J,~~~~\partial_{\ti z}K=-[J,\sigma_3]J^{-1}.
\end{align}
Since $J$ and $K$ are $2\times 2$ matrix functions, we can denote them as
\begin{align}\label{expan-JK}
	J\doteq
	\begin{pmatrix}
		J_{11} & J_{12} \\
		J_{21} & J_{22}
	\end{pmatrix},~~~~
	K\doteq
	\begin{pmatrix}
		K_{11} & K_{12} \\
		K_{21} & K_{22}
	\end{pmatrix}.
\end{align}
Note that the setting $|J|=1$ in \eqref{J=1} indicates
\[J^{-1}=
	\begin{pmatrix}
		J_{22} & -J_{12} \\
		-J_{21} & J_{11}
	\end{pmatrix}.
\]

In the following, we only focus on \eqref{Miura-pKN}, which yields the explicit relations
\bsb\label{Miura-pKN-expand}
\begin{align}\label{Miura-pKN-expand-1}
&	\begin{pmatrix}
		J_{11,z} & J_{12,z} \\
		J_{21,z} & J_{22,z}
	\end{pmatrix}
	=
	\begin{pmatrix}
		2K_{12}J_{21} & 2K_{12}J_{22} \\
		-2K_{21}J_{11} & -2K_{21}J_{12}
	\end{pmatrix}, \\
\label{Miura-pKN-expand-2}
&	\begin{pmatrix}
		K_{11,\ti z} & 	K_{12,\ti z} \\
		K_{21,\ti z} & 	K_{22,\ti z}
	\end{pmatrix}=
	\begin{pmatrix}
		-2J_{12}J_{21} & 2J_{12}J_{11} \\
		-2J_{21}J_{22} & 2J_{21}J_{12}
	\end{pmatrix}.
\end{align}
\esb
Then we give the following theorem.

\begin{theorem}\label{thm-1}
Introduce  new coordinates
\[(x,t)\doteq(2\ti z,2z)\in\mb R^2\]
and functions
	\begin{align}\label{def-uv}
		(u,v)\doteq\left(K_{21}, \mr i\frac{J_{12}}{J_{22}}\right).
	\end{align}
Then $(u,v)$ solves the pKN$(-1)$ system \eqref{FL-unreduce}.
\end{theorem}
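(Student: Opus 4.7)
The plan is to verify Theorem~\ref{thm-1} by direct substitution, using the Miura-type relations \eqref{Miura-pKN-expand-1}--\eqref{Miura-pKN-expand-2} together with the unimodularity $|J|=1$ from \eqref{J=1}. Since $x=2\tilde z$ and $t=2z$, one has $\partial_x=\tfrac12\partial_{\tilde z}$ and $\partial_t=\tfrac12\partial_z$, so the relevant $x$-derivatives of $K$-entries and $t$-derivatives of $J$-entries can be read off immediately from \eqref{Miura-pKN-expand-2} and \eqref{Miura-pKN-expand-1}, respectively. The only nonobvious ingredient is the constraint $J_{11}J_{22}-J_{12}J_{21}=1$, which is precisely what will close each of the two pKN$(-1)$ equations.

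For \eqref{FL-unreduce-1}, I would first compute $u_x=\tfrac12 K_{21,\tilde z}=-J_{21}J_{22}$ from \eqref{Miura-pKN-expand-2}. Differentiating once more in $t$ via \eqref{Miura-pKN-expand-1} gives $u_{xt}=K_{21}(J_{11}J_{22}+J_{12}J_{21})$. Replacing $J_{11}J_{22}$ by $1+J_{12}J_{21}$ turns this into $u_{xt}=u+2K_{21}J_{12}J_{21}$. The remaining step is to identify the coefficient $2K_{21}J_{12}J_{21}$ with $2\mathrm{i} uv u_x$: from $v=\mathrm{i} J_{12}/J_{22}$ and $u_x=-J_{21}J_{22}$, one has $J_{12}J_{21}=\mathrm{i} v u_x$, and the first equation of \eqref{FL-unreduce} follows.

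For \eqref{FL-unreduce-2}, the computation is parallel but slightly longer. Using the quotient rule and \eqref{Miura-pKN-expand-1}, together with $J_{12}^2/J_{22}^2=-v^2$, one finds $v_t=\mathrm{i} K_{12}-\mathrm{i} uv^2$. Applying $\tfrac12\partial_{\tilde z}$ and invoking \eqref{Miura-pKN-expand-2} for $K_{12,\tilde z}=2J_{11}J_{12}$ yields $v_{xt}=\mathrm{i} J_{11}J_{12}-\mathrm{i} u_xv^2-2\mathrm{i} uv v_x$. Substituting into the left-hand side of \eqref{FL-unreduce-2} collapses the $uv v_x$ terms and leaves $\mathrm{i} J_{11}J_{12}-\mathrm{i} u_xv^2-v$, which factors as $\frac{\mathrm{i} J_{12}}{J_{22}}\bigl(J_{11}J_{22}-J_{12}J_{21}-1\bigr)$ and vanishes by $|J|=1$.

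There is no real obstacle: the argument is a short calculation, and the only conceptual point worth flagging is that the unimodularity reduction $G=\mathrm{SL}(2)$ is not cosmetic but structurally essential, since both equations in \eqref{FL-unreduce} close only after the determinantal identity $J_{11}J_{22}-J_{12}J_{21}=1$ is invoked. I would also note in passing that the computation uses neither \eqref{reduce-J} nor \eqref{reduce-K} directly; the general SDYM form \eqref{Miura-pKN} is genuinely the most convenient starting point for this reduction.
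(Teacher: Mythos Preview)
Your proposal is correct and follows essentially the same route as the paper: both arguments verify \eqref{FL-unreduce-1} and \eqref{FL-unreduce-2} by direct substitution of the Miura relations \eqref{Miura-pKN-expand-1}--\eqref{Miura-pKN-expand-2}, with the determinantal identity $|J|=1$ closing each equation. Your presentation is slightly more explicit (you isolate $u_x=-J_{21}J_{22}$ and display the factorization $\tfrac{\mathrm{i}J_{12}}{J_{22}}(J_{11}J_{22}-J_{12}J_{21}-1)$ for the second equation), but the underlying computation and the key use of unimodularity are identical to the paper's proof.
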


\begin{proof}
Using the relations in \eqref{Miura-pKN-expand}, a direct calculation shows that
	\begin{align}
		u_{xt}-u=K_{21,xt}-K_{21}=(J_{11}J_{22}+J_{12}J_{21}-1)K_{21}=2J_{12}J_{21}K_{21}=2\mr ivuu_x,
	\end{align}
where we have made use of the result $|J|=1$.
This is nothing but the first equation in the pKN$(-1)$ system \eqref{FL-unreduce}.
For the second equation \eqref{FL-unreduce-2}, taking $t$-derivative of $v$ yields
\begin{align}\label{thm-1-process}
v_t=\mr i\Bigl(\frac{J_{12}}{J_{22}}\Bigr)_t=\mr i\frac{J_{12,t}J_{22}-J_{12}J_{22,t}}{J_{22}^2}
=\mr iK_{12}-\mr iv^2u.
\end{align}
Further, using \eqref{Miura-pKN-expand}	  we have
	\begin{align}
		v_{xt}=\mr i J_{12}J_{11}-\mr i\frac{J_{12}^2}{J_{22}}J_{21}-2\mr iuvv_x=v-2\mr iuvv_x,
	\end{align}
which is  \eqref{FL-unreduce-2}.
Thus we complete the proof.

\end{proof}

We conclude this subsection with the following remarks.

\begin{remark}\label{rem-1}
The pKN$(-1)$ system \eqref{FL-unreduce} can be reduced from the general SDYM equation \eqref{Miura-JK}
under the reduction constraint \eqref{decompose-JK}.
\end{remark}

\begin{remark}\label{rem-2}
$K_{12}$ and $K_{21}$ in \eqref{reduce-K} enjoy a coupled closed from:
\bsb\label{AKNS-1}
\begin{align}
	\partial_z\partial_{\ti z}K_{12}-4K_{12}+8K_{12}\partial_z^{-1}\partial_{\ti z}(K_{12}K_{21})=0, \\
	\partial_z\partial_{\ti z}K_{21}-4K_{21}+8K_{21}\partial_z^{-1}\partial_{\ti z}(K_{12}K_{21})=0,
\end{align}
\esb
which is known as the first member in the negative AKNS hierarchy (AKNS$(-1)$ system for short)
or the non-potential sine-Gordon system,
(see equation (4.11) in \cite{Zhao-2018} or (2.9) in \cite{Zhang-2009-PD}).
\end{remark}

\begin{remark}\label{rem-3}
The reduction condition \eqref{decompose-JK} is not unique.
 One can  replace $\sigma_3$ in \eqref{decompose-JK} with either of the following,
 \begin{equation}
 \label{P123}
 P_1\doteq\mr{diag}(1,0), ~~ \mathrm{or}~~ P_2\doteq\mr{diag}(0,-1),~~ \mathrm{or}~~
 P_3\doteq\mr{diag}\left(\frac{1}{2},-\frac{1}{2}\right),
 \end{equation}
and introduce $(x,t)\doteq(\ti z,z)\in\mb R^2$.
Then $(u,v)$ defined as in \eqref{def-uv} still solves the pKN$(-1)$ system \eqref{FL-unreduce}.
\end{remark}

\begin{remark}\label{rem-4}
In addition to \eqref{def-uv}, 	 the following setting
	\begin{align}\label{def-uv-2}
		(u,v)\doteq\left(\mr i\frac{J_{21}}{J_{11}},K_{12}\right)
	\end{align}
also satisfies the pKN$(-1)$ system \eqref{FL-unreduce} under the reduction \eqref{decompose-JK}.
\end{remark}

\begin{remark}\label{rem-5}
In practice, if $J$ and $K$ satisfy \eqref{JK-w} together with the setting $|J|=1$,
one can always reduce the general SDYM equation \eqref{Miura-JK} to the pKN$(-1)$ system \eqref{FL-unreduce}.
In the next section, we will show how these conditions are fulfilled in the Cauchy matrix approach.
\end{remark}

\section{Realization of reductions in Cauchy matrix approach}\label{sec-3}

Recalling the Cauchy matrix approach presented in \cite{LSS-2023},
the general SDYM equation \eqref{Miura-JK} can be well defined;
in addition, relations in \eqref{JK-w} and the setting $|J|=1$ can also arise from the Cauchy matrix approach
of the SDYM equation.
In this section, first, we will recall the two Cauchy matrix schemes, namely, the KP-type and the AKNS-type,
in which the general SDYM equation \eqref{Miura-JK} and relation  \eqref{JK-w}
have been established.
This then allows us to realize reductions and present explicit solutions to the pKN$(-1)$ system \eqref{FL-unreduce}.

\subsection{Realization of reductions}\label{sec-3-1}

Recently in \cite{LSS-2023} we have studied the SDYM equation from two types of the Sylvester equations:
\begin{itemize}
	\item The KP-type Sylvester equation (asymmetric Sylvester equation):
	\begin{align}\label{Syl-KP}
		\bs K\bs M-\bs M\bs L=\bs r\bs s^T,
	\end{align}
	where $\bs K,\bs L,\bs M\in\mb C_{N\times N}$, $\bs r=(\bs r_1,\bs r_2)\in\mb C_{N\times 2}$,
$\bs s=(\bs s_1,\bs s_2)\in\mb C_{N\times 2}$.
	By introducing $\bs M_1,\bs M_2$ that satisfy
	\begin{align}
		\bs K\bs M_1-\bs M_1\bs L=\bs r_1\bs s_1^T, ~~~~
		\bs K\bs M_2-\bs M_2\bs L=\bs r_2\bs s_2^T,
	\end{align}
	we have $\bs M=\bs M_1+\bs M_2$.
	The master functions are defined as
	\begin{align}
		\bs S^{(i,j)}_{[\mr{KP}]}&=\bs s^T\bs L^j \bs M_1^{-1}\bs K^i\bs r
       =\bs s^T\bs L^j(\bs M_1+\bs M_2)^{-1}\bs K^i\bs r
		=\begin{pmatrix}
			s_{11}^{(i,j)} & s_{12}^{(i,j)} \\
			s_{21}^{(i,j)} & s_{22}^{(i,j)}
		\end{pmatrix} \notag \\
		&=
		\begin{pmatrix}
			\bs s_1^T\bs L^j(\bs M_1+\bs M_2)^{-1}\bs K^i\bs r_1 &
\bs s_1^T\bs L^j(\bs M_1+\bs M_2)^{-1} \bs K^i\bs r_2 \\
			\bs s_2^T\bs L^j(\bs M_1+\bs M_2)^{-1}\bs K^i\bs r_1 &
\bs s_2^T\bs L^j(\bs M_1+\bs M_2)^{-1} \bs K^i\bs r_2 \\
		\end{pmatrix}.\label{3.3}
	\end{align}
	
	\item The AKNS-type Sylvester equation (symmetric Sylvester equation):
	\begin{align}\label{Syl-AKNS}
		\bs K\bs M-\bs M\bs K=\bs r\bs s^T,
	\end{align}
	where $\bs K,\bs M,\bs r,\bs s$ are block matrices in the forms of
	\begin{align}\label{3.5}
		\bs K=
		\begin{pmatrix}
			\bs K_1 & \\
			& \bs K_2
		\end{pmatrix},~~~~
		\bs M=
		\begin{pmatrix}
			& \bs M_1 \\
			\bs M_2 &
		\end{pmatrix},~~~~
		\bs r=
		\begin{pmatrix}
			\bs r_1 & \\
			& \bs r_2
		\end{pmatrix},~~~~
		\bs s=
		\begin{pmatrix}
			& \bs s_1 \\
			\bs s_2 &
		\end{pmatrix},
	\end{align}
	with $\bs K_i,\bs M_i\in\mb C_{N\times N}$, $\bs r_i,\bs s_i\in\mb C_{N\times1}$, $i=1,2$.
Expansion of \eqref{Syl-AKNS} yields
	\begin{align}
		\bs K_1\bs M_1-\bs M_1\bs K_2=\bs r_1\bs s_2^T, ~~~~
		\bs K_2\bs M_2-\bs M_2\bs K_2=\bs r_2\bs s_1^T.
	\end{align}
	The master functions are defined as
	\begin{align}\label{3.7}
		\bs S^{(i,j)}_{[\mr{AKNS}]}&=\bs s^T\bs K^j(\bs I_{2N}+\bs M)^{-1}\bs K^i\bs r
		=\begin{pmatrix}
			s_1^{(i,j)} & s_2^{(i,j)} \\
			s_3^{(i,j)} & s_4^{(i,j)}
		\end{pmatrix}\notag  \\
		&=
		\begin{pmatrix}
			\bs s_2^T\bs K_2^j(\bs M_1-\bs M_2^{-1})^{-1}\bs K_1^i\bs r_1 &
\bs s_2^T\bs K_2^j(\bs I_N-\bs M_2\bs M_1)^{-1}\bs K_2^i\bs r_2 \\
			\bs s_1^T\bs K_1^j(\bs I_N-\bs M_1\bs M_2)^{-1}\bs K_1^i\bs r_1 &
\bs s_1^T\bs K_1^j(\bs M_2-\bs M_1^{-1})^{-1}\bs K_2^i\bs r_2
		\end{pmatrix}.
	\end{align}
\end{itemize}

They give rise to two different formulations of solutions of the SDYM equation  \eqref{SDYM}. For convenience,
we call the solutions derived from \eqref{Syl-KP}/\eqref{Syl-AKNS} the KP/AKNS-type solution, respectively.

For the case of the KP-type, we introduce the following dispersion relations:
\begin{align}\label{dispersion-SDYM}
	\bs r_{x_n}=\bs K^n\bs r\bs a,~~~~\bs s_{x_n}=-(\bs L^T)^n\bs s\bs a,~~~~n\in\mb Z,
\end{align}
where $\bs a=\mr{diag}(a_1,a_2)$. Then the setting
\begin{align}\label{def-UV}
	(\bs V,\bs U)\doteq(\bs I_2-\bs S^{(-1,0)}_{[\mr{KP}]},\bs S^{(0,0)}_{[\mr{KP}]})
\end{align}
yields a differential recurrence relation (see Appendix A in \cite{LSS-2023}):
\begin{align}\label{VVU}
	\bs V_{x_{n+1}}\bs V^{-1}=-\bs U_{x_n}.
\end{align}
This actually provides the general SDYM equation \eqref{Miura-JK} through taking different $n$.
In addition, for both $\bs U$ and $\bs V$, there hold the relations (see equation (2.5b) in \cite{LSS-2023})
\begin{equation}\label{VU-a}
	\bs U_{x_{0}}=[\bs U, \bs a], ~~~ \bs V_{x_{0}}=[\bs V, \bs a].
\end{equation}
Thus, in summary, if we take $\tilde w=w=x_0$ in the above relations and $n=0,-1$ in \eqref{VVU},
namely,
\begin{align}\label{VVU-2}
	\bs V_{x_{1}}\bs V^{-1}=-\bs U_{x_0}, ~~~
\bs V_{x_{0}}\bs V^{-1}=-\bs U_{x_{-1}},
\end{align}
we can recover \eqref{JK-w} and \eqref{Miura-JK} by choosing
\begin{equation}\label{3.13}
(\bs U, \bs V, x_{-1}, x_1, x_0)=(K, J, \tilde z, z, \tilde w=w).
\end{equation}
Note also that in this case $|\bs V|=|\bs L|/|\bs K|$ (see Theorem 2 in \cite{LSS-2023}),
which can be normalized to be $1$.
Thus, the reduction from the general SDYM equation \eqref{Miura-JK} to the pKN$(-1)$ system \eqref{FL-unreduce}
can be realized in the Cauchy matrix scheme of the KP-type.

For the case of the AKNS-type, the dispersions are introduced by replacing $\bs L$ in \eqref{dispersion-SDYM}
with $\bs K$, i.e.
\begin{align}\label{dispersion-AKNS}
	\bs r_{x_n}=\bs K^n\bs r\bs a,~~~~\bs s_{x_n}=-(\bs K^T)^n\bs s\bs a,~~~~n\in\mb Z,
\end{align}
and we introduce two functions as
\begin{align}\label{3.15}
	(\bs v,\bs u)\doteq\left(\bs I_2-\bs S^{(-1,0)}_{[\mr{AKNS}]},\bs S^{(0,0)}_{[\mr{AKNS}]}\right).
\end{align}
There are similar relations (see Theorem 1 in \cite{LSS-2022}):
\begin{align}
	\bs v_{x_{n+1}}\bs v^{-1}=-\bs u_{x_n}, ~~~n\in\mb Z,
\end{align}
and (see equation (2.5b) in \cite{LSS-2023})
\begin{equation}\label{VU-a}
	\bs u_{x_{0}}=[\bs u, \bs a], ~~~ \bs v_{x_{0}}=[\bs v, \bs a].
\end{equation}
Then, similarly, one can  recover \eqref{JK-w} and \eqref{Miura-JK} by choosing
\begin{equation}\label{3.18}
(\bs u, \bs v, x_{-1}, x_1, x_0)=(K, J, \tilde z, z, \tilde w=w).
\end{equation}
We can also have $|\bs v|=1$ in this case (see Theorem 2 in \cite{LSS-2023}).
Thus, the reductions are realized from the Cauchy matrix scheme of the AKNS-type as well.

Now that these reductions can be realized from the Cauchy matrix approach,
we can derive explicit solutions for the pKN$(-1)$ system by using the known results
of the (matrix) KP hierarchy and AKNS hierarchy.

\subsection{Solutions of pKN$(-1)$ system from Cauchy matrix scheme}\label{sec-3-2}

\subsubsection{From KP-type Cauchy matrix scheme}\label{sec-3-2-1}

We have shown that the reduction conditions can be fulfilled in the two Cauchy matrix schemes.
That means we can then get explicit solution $(u,v)$ for the pKN$(-1)$ system through
the formulation \eqref{def-uv}.
In practice, we need to in the first step present explicit expression
for $\bs U$ and $\bs V$ defined in \eqref{def-UV}
and then recover $(u,v)$ from  \eqref{def-uv}.
To achieve that, we consider the following solutions of the Sylvester equation \eqref{Syl-KP} together with
the dispersion relation \eqref{dispersion-SDYM} (see \cite{LSS-2023}):
\bsb
\begin{align}
&\bs K=\mr{diag}(k_1,\cdots,k_N), ~~~~\bs L=\mr{diag}(l_1,\cdots,l_N), ~~~~ k_i, l_i\in \mathbb{C},\\
\label{r_j}& \bs r=(\bs r_1,\bs r_2), ~~~~ \bs r_j=(\rho_j(k_1),\cdots,\rho_j(k_N))^T,\\
\label{s_j}& \bs s=(\bs s_1,\bs s_2), ~~~~ \bs s_j=(\sigma_j(l_1),\cdots,\sigma_j(l_N))^T,~~~~j=1,2, \\
\label{M-Cauchy-matrix}&\bs M=\bs M_1+\bs M_2,~~~~
\bs M_j=(M_{is}^{(j)})_{N\times  N},~~~~
M_{is}^{(j)}=\frac{\rho_j(k_i)\sigma_j(l_s)}{k_i-l_s},~~~~ j=1,2,
\end{align}
\esb
with the plane wave factors
\begin{align}\label{PWF-FL}
\rho_j(k_i)=\exp\left(a_j\mathcal{L}(k_i)+\lambda_j(k_i)\right),~~
\sigma_j(l_i)=\exp\left(-a_j\mathcal{L}(l_i)+\mu_1(l_i)\right),~~~~ j=1,2,
\end{align}
where
\begin{align}\label{L}
		\mathcal{L}(k)\doteq k^nx_n+k^{n+1}x_{n+1}+k^mx_m+k^{m+1}x_{m+1},
\end{align}
and the phase factors $\lambda_j(k), \mu_j(k)$ are functions of $k$.

In light of \eqref{3.3}, \eqref{def-UV} and \eqref{3.13}, we have
\begin{equation}
		\bs V\doteq \bs I_2-\bs s^T\bs M^{-1}\bs K^{-1}\bs r,~~~~
		\bs U\doteq\bs s^T\bs M^{-1}\bs r,
\end{equation}
\begin{equation}
J=	\bs V=
	\begin{pmatrix}
		V_{11} & V_{12} \\
		V_{21} & V_{22}
	\end{pmatrix},~~~~
K=	\bs U=
	\begin{pmatrix}
		U_{11} & U_{12} \\
		U_{21} & U_{22}
	\end{pmatrix},
\end{equation}
and further, from \eqref{def-uv} we have
\begin{align}\label{def-uv-asym}
(u,v)=\left(U_{21},\mr i\frac{V_{12}}{V_{22}}\right)
=\left(\bs s^T_2\bs M^{-1}\bs r_1,-\frac{\mr i\bs s_1^T\bs M^{-1}\bs K^{-1}\bs r_2}
{1-\bs s_2^T\bs M^{-1}\bs K^{-1}\bs r_2}\right),
\end{align}
which will be solutions of the pKN$(-1)$ system \eqref{FL-unreduce} if taking
$(n,m)=(-1,0)$ in \eqref{L} and
\begin{equation}\label{a-xt}
\bs a=\sigma_3, ~~~ (x,t)\doteq(2x_{-1},2x_1).
\end{equation}
Note that $|\bs V|$ is a constant given by (see Theorem 2 in \cite{LSS-2023})
$|\bs V|=|\bs L|/|\bs K|$
and gives rise to same $(u,v)$ through \eqref{def-uv-asym}
no matter $|\bs V|$ is normalized to be $1$ or not.
So, in the following, no need to normalize $\bs V$.
In addition, corresponding to Remark \ref{rem-3} we have the following.

\begin{remark}\label{rem-6}
In the case $\bs a$ takes either of $P_j$ for $j=1,2,3$
as given in \eqref{P123}, we shall take
$(x,t)\doteq(x_{-1},x_1)$ and take the plane wave factors as the following,
\begin{align}\label{3.26}
	\rho_j(k_i)=\exp\left(a_j\left(\frac{1}{k_i}x+k_it\right)+\lambda'_j(k_i)\right), ~~
 \sigma_j(l_i)=\exp\left(-a_j\left(\frac{1}{l_i}x+l_it\right)+\mu'_j(l_i)\right),
\end{align}
for $j=1,2$, where the phase factors have been taken as
\begin{align}
\lambda'_j(k_i)=\lambda_j(k_i)+2a_jx_0, ~~
\mu'_j(k_i)=\mu_j(l_j)-2a_jx_0, ~~~j=1,2.
\end{align}
 \end{remark}

Next, we present the $(u,v)$ formulation in a more explicit form.
Note that the dressed Cauchy matrix $\bs M$ can be expressed in the following decomposition form:
\begin{align}
	\bs M=\bs M_1+\bs M_2,~~~~
	\bs M_1=\bs R_1\bs G\bs S^T_1,~~~~\bs M_2=\bs R_2\bs G\bs S^T_2,
\end{align}
where
\bsb
\begin{align}
	&\bs R_j=\mr{diag}(\rho_j(k_1),\cdots,\rho_j(k_N)),~~~~
	\bs S_j=\mr{diag}(\sigma_j(l_1),\cdots,\sigma_j(l_N)),~~~~j=1,2, \\
	\label{Cauchy-G}&\bs G=(G_{ij})_{N\times N}=\frac{1}{k_i-l_j},~~~~i,j=1,\cdots, N.
\end{align}
\esb
This fact indicates that we can rewrite $\bs r_j$ and $\bs s_j$ as
\begin{align}
	\bs r_j=\bs R_j\bs e_N,~~~~\bs s_j=\bs S_j\bs e_N,~~~~\bs e_N=(\underbrace{1,1,\cdots,1}_{\text{$N$-dimensional}})^T,~~~~j=1,2.
\end{align}
Thus for each $s_{ab}^{(i,j)}$ in $\bs S^{(i,j)}_{[\mr{KP}]}$, $a,b=1,2$, we have
\bsb
\begin{align}
	&s_{11}^{(i,j)}=\bs e_N^T\bs L^j(\bs G+\bs R_1^{-1}\bs R_2\bs G\bs S_2^T(\bs S_1^T)^{-1})^{-1}
\bs K^i\bs e_N, \\
	&s_{12}^{(i,j)}=\bs e^T_N\bs L^j(\bs R_2^{-1}\bs R_1\bs G+\bs G\bs S_2^T(\bs S_1^T)^{-1})^{-1}
\bs K^i\bs e_N, \\
	&s_{21}^{(i,j)}=\bs e_N^T\bs L^j(\bs G\bs S_1^T(\bs S_2^T)^{-1}+\bs R_1^{-1}\bs R_2\bs G)^{-1}
\bs K^i\bs e_N, \\
	&s_{22}^{(i,j)}=\bs e_N^T\bs L^j(\bs R_2^{-1}\bs R_1\bs G\bs S_1^T(\bs S_2^T)^{-1}+\bs G)^{-1}
\bs K^i\bs e_N.
\end{align}
\esb
Thus we can rewrite the formula \eqref{def-uv-asym} as
\bsb
\begin{align}
	&u=\bs s_2^T\bs M^{-1}\bs r_1=
	\bs e_N^T(\bs G\bs S_1^T(\bs S_2^T)^{-1}
+\bs R_1^{-1}\bs R_2\bs G)^{-1}\bs e_N, \\
	&v=-\frac{\mr i\bs s_1^T\bs M^{-1}\bs K^{-1}\bs r_2}{1-\bs s_2^T\bs M^{-1}\bs K^{-1}\bs r_2}
	=
	-\frac{\mr i\bs e_N^T(\bs R_2^{-1}\bs R_1\bs G+\bs G\bs S_2^T(\bs S_1^T)^{-1})^{-1}
\bs K^{-1}\bs e_N}{1-\bs e_N^T(\bs R_2^{-1}\bs R_1\bs G\bs S_1^T(\bs S_2^T)^{-1}+\bs G)^{-1}
\bs K^{-1}\bs e_N}.
\end{align}
\esb
Notice that in this explicit formulation, we can introduce
\bsb
\begin{align}
	&\bs R\doteq\bs R_1^{-1}\bs R_2=\mr{diag}\left(\frac{\rho_2(k_1)}{\rho_1(k_1)},\cdots,\frac{\rho_2(k_N)}{\rho_1(k_N)}\right), \\
	&\bs S\doteq\bs S_1^{-1}\bs S_2=\mr{diag}\left(\frac{\sigma_2(l_1)}{\sigma_1(l_1)},\cdots,\frac{\sigma_2(l_N)}{\sigma_1(l_N)}\right),
\end{align}
\esb
where
\bsb
\begin{align}
	&\frac{\rho_2(k_i)}{\rho_1(k_i)}
=\exp\left((a_2-a_1)\left(\frac{1}{k_i}x+k_it\right)+\lambda'_2(k_i)-\lambda'_1(k_i)\right), \\
	&\frac{\sigma_2(l_i)}{\sigma_1(l_i)}
=\exp\left((a_1-a_2)\left(\frac{1}{l_i}x+l_it\right)+\mu'_2(l_i)-\mu'_1(l_i)\right).
\end{align}
\esb
This fact also indicates that different $\bs a$ may lead to the same results.
Finally, we have explicit formulas for  $(u,v)$ as
\bsb\label{uv-general-KP}
\begin{align}
	&u=\bs e_N^T(\bs G\bs S^{-1}+\bs R\bs G)^{-1}\bs e_N,\\
	&v=-\frac{\mr i\bs e^T_N(\bs R^{-1}\bs G+\bs G\bs S)^{-1}\bs K^{-1}\bs e_N}{1-\bs e_N^T
(\bs R^{-1}\bs G\bs S^{-1}+\bs G)^{-1}\bs K^{-1}\bs e_N},
\end{align}
\esb
where we have taken $\bs a$ to be either of $P_j$ as given in \eqref{P123}, and
\bsb\label{def-RS}
\begin{align}
	&\bs R=\mr{diag}(r(k_1),\cdots,r(k_N)),~~~~
	r(k_i)=\exp\left(-\frac{1}{k_i}x-k_it+\zeta(k_i)\right), \\
	&\bs S=\mr{diag}(s(l_1),\cdots,s(l_N)),~~~~
	s(l_i)=\exp\left(\frac{1}{l_i}x+l_it+\eta(l_i)\right)
\end{align}
with $\zeta(k_i)$ and $\eta(l_i)$ being phase factors.
\esb

\subsubsection{From AKNS-type Cauchy matrix scheme}\label{sec-3-2-2}

In the case of the AKNS-type, the Sylvester equation \eqref{Syl-AKNS} with \eqref{3.5}
and the dispersion relation \eqref{dispersion-AKNS} have solutions
(see \cite{Zhao-2018,LSS-2022,LSS-2023}):
\bsb\label{def-sym}
	\begin{align}
		&\bs K_1=\mr{diag}(k_1,\cdots,k_N),~~~~\bs K_2=\mr{diag}(l_1,\cdots,l_N), \\
		&\bs r_1=(\rho_1(k_1),\cdots,\rho_1(k_N))^T,~~~~\bs s_1=(\sigma_1(k_1),\cdots,\sigma_1(k_N))^T,\\
		&\bs r_2=(\rho_2(l_1),\cdots,\rho_2(l_N))^T,~~~~\bs s_2=(\sigma_2(l_1),\cdots,\sigma_2(l_N))^T,\\
		\label{M1}&\bs M_1=(M_{1,ij})_{N\times N},~~~~M_{1,ij}=\frac{\rho_1(k_i)\sigma_2(l_j)}{k_i-l_j}, \\
		\label{M2}&\bs M_2=(M_{2,ij})_{N\times N},~~~~M_{2,ij}=\frac{\rho_2(l_i)\sigma_1(k_j)}{l_i-k_j},
	\end{align}
	\esb
where the plane wave factors $\rho_j, \sigma_j$ are defined as in \eqref{PWF-FL}.

In light of \eqref{3.7}, \eqref{3.15} and \eqref{3.18}, we have
\begin{align}
		\bs v\doteq \bs I_2-\bs s^T(\bs I+\bs M)^{-1}\bs K^{-1}\bs r,~~~~
		\bs u\doteq \bs s^T(\bs I+\bs M)^{-1}\bs r.
\end{align}
Suppose
\begin{align}
J=\bs v=
	\begin{pmatrix}
		v_{1} & v_{2} \\
		v_{3} & v_{4}
	\end{pmatrix},~~~~
K=\bs u=
	\begin{pmatrix}
		u_{1} & u_{2} \\
		u_{3} & u_{4}
	\end{pmatrix}.
\end{align}
Then, the following functions
\begin{align}\label{def-uv-sym}
		(u,v)\doteq \left(u_{3},\mr i\frac{v_{2}}{v_{4}}\right)=\left(\bs s_1^T(\bs I_N-\bs M_1\bs M_2)^{-1}
\bs r_1,\frac{-\mr i\bs s_2^T(\bs I_N-\bs M_2\bs M_1)^{-1}\bs K_2^{-1}\bs r_2}{1+\bs s_1^T
(\bs I_N-\bs M_1\bs M_2)^{-1}\bs M_1\bs K_2^{-1}\bs r_2}\right)
	\end{align}
provide solutions to the  pKN$(-1)$ system \eqref{FL-unreduce} where we take
$(n,m)=(-1,0)$ in \eqref{L} and take \eqref{a-xt} as well.
Note that in the case of the AKNS-type, $|J|=1$ (see Theorem 2 in \cite{LSS-2023}
and the Remark \ref{rem-6} holds too
and thus in the following we use the plane wave factors defined in the form \eqref{3.26}.

To give explicit expressions of $(u,v)$,
we factorize $\bs M_1$ and $\bs M_2$ as the following:
\begin{align}
	\bs M_1=\bs R_1\bs G\bs S_2^T,~~~~\bs M_2=-\bs R_2\bs G^T\bs S_1^T,
\end{align}
where
\bsb
\begin{align}
	&\bs R_1=\mr{diag}(\rho_1(k_1),\cdots,\rho_1(k_N)),~~~~
	\bs S_2=\mr{diag}(\sigma_2(l_1),\cdots,\sigma_2(l_N)), \\
	&\bs R_2=\mr{diag}(\rho_2(l_1),\cdots,\rho_2(l_N)),~~~~
	\bs S_1=\mr{diag}(\sigma_1(k_1),\cdots,\sigma_1(k_N)), \\
	&\bs G=\left(\frac{1}{k_i-l_j}\right)_{1\leq i,j\leq N},~~~~
\bs G^T=\left(-\frac{1}{l_i-k_j}\right)_{1\leq i,j\leq N},~~~~
	\bs e_N=(\underbrace{1,1,\cdots,1}_{\text{$N$-dimensional}}),
\end{align}
\esb
and the plane wave factors are defined as in \eqref{3.26}.
For each entry in $\bs S^{(i,j)}_{[\mr{AKNS}]}$, we can rewrite them as
\bsb
\begin{align}
	&s_1^{(i,j)}=\bs e_N^T\bs K_2^j(\bs G+\bs R_1^{-1}(\bs S_1^T)^{-1}(\bs G^T)^{-1}\bs R_2^{-1}
(\bs S_2^T)^{-1})^{-1}\bs K_1^i\bs e_N, \\
	&s_2^{(i,j)}=\bs e_N^T\bs K_2^j(\bs R_2^{-1}(\bs S_2^T)^{-1}+\bs G^T\bs S_1^T\bs R_1\bs G)^{-1}
\bs K_2^i\bs e_N, \\
	&s_3^{(i,j)}=\bs e_N^T\bs K_1^j(\bs R_1^{-1}(\bs S_1^T)^{-1}+\bs G\bs S_2^T\bs R_2\bs G^T)\bs K_1^i
\bs e_N, \\
	&s_4^{(i,j)}=-\bs e_N^T\bs K_1^j(\bs G^T+\bs R_2^{-1}(\bs S_2^T)^{-1}\bs G^{-1}\bs R_1^{-1}
(\bs S_1^T)^{-1})^{-1}\bs K_2^i\bs e_N.
\end{align}
\esb
Then we  introduce
\bsb
\begin{align}
	&\bs P\doteq\bs R_1\bs S_1^T=\mr{diag}(\rho_1(k_1)\sigma_1(k_1),\cdots,\rho_1(k_N)\sigma_1(k_N)), \\
	&\bs Q\doteq\bs R_2\bs S_2^T=\mr{diag}(\rho_2(l_1)\sigma_2(l_1),\cdots,\rho_2(l_N)\sigma_2(l_N)).
\end{align}
\esb
Finally,  $u$ and $v$ can be expressed as
\bsb\label{uv-general-AKNS}
\begin{align}
	\label{uv-general-AKNS-1}&u=\bs e_N^T(\bs P^{-1}+\bs G\bs Q\bs G^T)^{-1}\bs e_N, \\
	\label{uv-general-AKNS-2}&v=\frac{-\mr i\bs e_N^T(\bs Q^{-1}+\bs G^T\bs P\bs G)^{-1}\bs K_2^{-1}
\bs e_N}{1+\bs e^T_N(\bs G^T+\bs Q^{-1}\bs G^{-1}\bs P^{-1})^{-1}\bs K_2^{-1}\bs e_N},
\end{align}
\esb
where we have taken $\bs a$ to be either of $P_j$ as given in \eqref{P123}, and
\bsb\label{def-PQ}
\begin{align}
	&\bs P=\mr{diag}(p(k_1),\cdots,p(k_N)),~~~~
	p(k_i)=\exp\left(\frac{1}{k_i}x+k_it+\omega(k_i)\right), \\
	&\bs Q=\mr{diag}(q(l_1),\cdots,q(l_N)),~~~~
	q(l_i)=\exp\left(-\frac{1}{l_i}x-l_it+\theta(l_i)\right).
\end{align}
with $\omega(k_i)$ and $\theta(l_i)$ being phase factors.
\esb

The following result follows from Remark \ref{rem-4}.

\begin{remark}\label{rem-7}
The pKN$(-1)$ system \eqref{FL-unreduce} also has solutions:
	\bsb\label{uv-general-AKNS-new}
	\begin{align}
		\label{uv-general-AKNS-new-1}&u=\mr i\frac{-s_3^{(-1,0)}}{1-s_1^{(-1,0)}}
=\frac{-\mr i\bs e_N^T(\bs P^{-1}+\bs G\bs Q\bs G^T)^{-1}\bs K_1^{-1}\bs e_N}
{1-\bs e_N^T(\bs G+\bs P^{-1}(\bs G^T)^{-1}\bs Q^{-1})^{-1}\bs K^{-1}\bs e_N}, \\
		\label{uv-general-AKNS-new-2}&v=s_2^{(0,0)}
=\bs e_N^T(\bs Q^{-1}+\bs G^T\bs P\bs G)^{-1}\bs e_N,
	\end{align}
	\esb
where $\bs P,\bs Q$ are defined in \eqref{def-PQ}.
\end{remark}

\subsubsection{Equivalence of the two types of solutions}\label{sec-3-2-3}

We can prove that the above two types of solutions are equivalent in some sense.

For Cauchy matrix $\bs G$ defined in \eqref{Cauchy-G}, its inverse can be represented in terms of $\bs G$ as
\begin{align}
		\bs G^{-1}=\bs X\bs G^T\bs Y,
\end{align}
where
\bsb
\begin{align}
&\bs X=\mr{diag}(X_1,\cdots,X_N), ~~~X_i=\frac{\Pi_{s=1}^N(k_s-l_i)}{\Pi_{1\leq s\leq N}^{s\neq i}(l_s-l_i)}, \\
&\bs Y=\mr{diag}(Y_1,\cdots,Y_N), ~~~Y_i=\frac{\Pi_{s=1}^N(k_i-l_s)}{\Pi_{1\leq s\leq N}^{s\neq i}(k_i-k_s)}.
\end{align}
\esb
In the following, we recover the KP-type solution $(u_{[\mr{KP}]},v_{[\mr{KP}]})$ given in \eqref{uv-general-KP}
from the AKNS-type solution $(u_{[\mr{AKNS}]},v_{[\mr{AKNS}]})$ given in \eqref{uv-general-AKNS-new}.

Starting from \eqref{uv-general-AKNS-new-2}, we can rewrite $v_{[\mr{AKNS}]}$ as:
\begin{align*}
	v_{[\mr{AKNS}]}&=\bs e_N^T(\bs Q^{-1}+\bs G^T\bs P\bs G)^{-1}\bs e_N \\
	&=\bs e_N^T(\bs Q^{-1}+\bs X^{-1}\bs G^{-1}\bs Y^{-1}\bs P\bs G)^{-1}\bs e_N \\
	&=\bs e_N^T(\bs G\bs X\bs Q^{-1}+\bs Y^{-1}\bs P\bs G)^{-1}\bs e_N,
\end{align*}
where $\bs P,\bs Q$ are defined in \eqref{def-PQ}
and for $\bs G\bs X\bs e_N$, we have used a property in Lagrange polynomial (see Lemma 2.4 in \cite{Alison-2015}):
\begin{align}
	\bs G\bs X\bs e_N=\left(\sum_{i=1}^N\frac{\Pi_{s\neq1}(k_s-l_i)}{\Pi_{s\neq i}(l_s-l_i)},\cdots,\sum_{i=1}^N\frac{\Pi_{s\neq N}(k_s-l_i)}{\Pi_{s\neq i}(l_s-l_i)}\right)^T=(1,\cdots,1)^T=\bs e_N.
\end{align}
Then, letting $\theta(l_i)=\eta(l_i)+\ln(X_i)$ and $\omega(k_i)=\zeta(k_i)+\ln(Y_i)$,   we have
\begin{align}
	\bs X\bs Q^{-1}(-x,-t)=\bs S^{-1}(x,t),~~~~\bs Y^{-1}\bs P(-x,-t)=\bs R(x,t).
\end{align}
Thus
\begin{align}
	v_{[\mr{AKNS}]}(-x,-t)&=\bs e_N^T(\bs G\bs X\bs Q^{-1}(-x,-t)+\bs Y^{-1}\bs P(-x,-t)\bs G)^{-1}\bs e_N
\nonumber\\
	&=\bs e_N^T(\bs G\bs S^{-1}(x,t)+\bs R(x,t)\bs G)^{-1}\bs e_N=u_{[\mr{KP}]}(x,t).\label{3.53}
\end{align}
Applying the same trick leads to a similar result for $u_{[\mr{AKNS}]}$:
\begin{align*}
	u_{[\mr{AKNS}]}&=
	\frac{-\mr i\bs e_N^T(\bs P^{-1}+\bs G\bs Q\bs G^T)^{-1}\bs K_1^{-1}\bs e_N}{1-\bs e_N^T
(\bs G+\bs P^{-1}(\bs G^T)^{-1}\bs Q^{-1})^{-1}\bs K_1^{-1}\bs e_N}  \\
	&=\frac{-\mr i\bs e_N^T(\bs P^{-1}\bs Y\bs G+\bs G\bs Q\bs X^{-1})^{-1}\bs K_1^{-1}\bs e_N}
{1-\bs e_N^T(\bs P^{-1}\bs Y\bs G\bs X\bs Q^{-1}+\bs G)^{-1}\bs K_1^{-1}\bs e_N}.
\end{align*}
Thus we find
\begin{align}
	u_{[\mr{AKNS}]}(-x,-t)=\frac{-\mr i\bs e_N^T(\bs R^{-1}(x,t)\bs G+\bs G\bs S(x,t))^{-1}\bs K^{-1}
\bs e_N}{1-\bs e_N^T(\bs R^{-1}(x,t)\bs G\bs S^{-1}(x,t)+\bs G)^{-1}\bs K^{-1}\bs e_N}
=v_{[\mr{KP}]}(x,t). \label{3.54}
\end{align}

\begin{proposition}\label{prop-1}
The solutions derived from the Cauchy matrix scheme of the KP-type in Sec.\ref{sec-3-2-1}
and derived from the scheme of the AKNS-type in Sec.\ref{sec-3-2-2}
are connected as
\begin{equation}
(u_{[\mr{KP}]}(x,t), v_{[\mr{KP}]}(x,t))=
(v_{[\mr{AKNS}]}(-x,-t), u_{[\mr{AKNS}]}(-x,-t)),
\end{equation}
which  coincides with the fact that
if $(u(x,t),v(x,t))$ solves the pKN$(-1)$ system \eqref{FL-unreduce}, so does  $(v(-x,-t),u(-x,-t))$.
\end{proposition}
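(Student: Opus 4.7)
My plan is to establish the identity by direct manipulation of the explicit formulas \eqref{uv-general-KP} and \eqref{uv-general-AKNS-new}, and then verify separately the claimed symmetry of the pKN$(-1)$ system. The bulk of the argument is already packaged into the identities \eqref{3.53} and \eqref{3.54}; the proposition is essentially their restatement, so the work is to justify those two equalities cleanly and then dispatch the symmetry claim.

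First I would record the Cauchy matrix factorization $\bs G^{-1}=\bs X\bs G^T\bs Y$ together with the Lagrange interpolation identities $\bs G\bs X\bs e_N=\bs e_N$ and (by transposition) $\bs e_N^T\bs Y\bs G=\bs e_N^T$. Substituting $\bs G^T=\bs X^{-1}\bs G^{-1}\bs Y^{-1}$ into the expression $v_{[\mr{AKNS}]}=\bs e_N^T(\bs Q^{-1}+\bs G^T\bs P\bs G)^{-1}\bs e_N$ and conjugating by $\bs G$ through $\bs e_N^T=\bs e_N^T\bs Y\bs G$ gives
\[
v_{[\mr{AKNS}]}=\bs e_N^T\bigl(\bs G\bs X\bs Q^{-1}+\bs Y^{-1}\bs P\bs G\bigr)^{-1}\bs e_N.
\]
The key step is then a phase matching: by absorbing the scalar factors $X_i,Y_i$ into the phase factors of the plane wave factors via $\theta(l_i)=\eta(l_i)+\ln X_i$ and $\omega(k_i)=\zeta(k_i)+\ln Y_i$, the diagonal matrices satisfy $\bs X\bs Q^{-1}(-x,-t)=\bs S^{-1}(x,t)$ and $\bs Y^{-1}\bs P(-x,-t)=\bs R(x,t)$. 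This immediately yields $v_{[\mr{AKNS}]}(-x,-t)=u_{[\mr{KP}]}(x,t)$.

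The argument for $u_{[\mr{AKNS}]}$ is structurally identical: insert $\bs G^T=\bs X^{-1}\bs G^{-1}\bs Y^{-1}$ into both numerator and denominator of \eqref{uv-general-AKNS-new-1}, multiply through by the appropriate $\bs G,\bs X,\bs Y$ factors on the left and right so that each inverted bracket is brought into the form $(\bs R^{-1}\bs G+\bs G\bs S)^{-1}$ or $(\bs R^{-1}\bs G\bs S^{-1}+\bs G)^{-1}$, and use the same phase matching to flip the sign of $(x,t)$. I expect this second calculation to be the main computational obstacle, since both entries of a ratio must be massaged consistently and one has to be careful about where $\bs K_1^{-1}$ and $\bs K^{-1}$ sit. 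Nothing in either manipulation is deep; it is a bookkeeping exercise in commuting diagonal matrices past $\bs G$ via the Cauchy inverse identity.

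Finally, for the consistency claim, I would verify directly that the map $\tau:(u(x,t),v(x,t))\mapsto(v(-x,-t),u(-x,-t))$ preserves the pKN$(-1)$ system. Writing $\tilde u(x,t)=v(-x,-t)$ and $\tilde v(x,t)=u(-x,-t)$, the chain rule gives $\tilde u_{xt}=v_{xt}(-x,-t)$ and $\tilde u_x=-v_x(-x,-t)$, so substituting into \eqref{FL-unreduce-1} for $(\tilde u,\tilde v)$ produces $v_{xt}-v-2\mr i v u(-v_x)=v_{xt}-v+2\mr i v u v_x$, evaluated at $(-x,-t)$, which is exactly \eqref{FL-unreduce-2}. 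By symmetry the second equation of the system for $(\tilde u,\tilde v)$ reduces to the first equation for $(u,v)$. Thus $\tau$ is a discrete symmetry of the system, and the identification of the KP- and AKNS-type solutions under $\tau$ is self-consistent, completing the proposition.
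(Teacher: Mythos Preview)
Your proposal is correct and follows essentially the same route as the paper: the identities \eqref{3.53} and \eqref{3.54} are derived in the text immediately preceding the proposition via exactly the Cauchy-inverse factorization $\bs G^{-1}=\bs X\bs G^T\bs Y$, the Lagrange identity $\bs G\bs X\bs e_N=\bs e_N$, and the phase matching $\theta(l_i)=\eta(l_i)+\ln X_i$, $\omega(k_i)=\zeta(k_i)+\ln Y_i$ that you describe. Your explicit verification that $(u,v)\mapsto(v(-x,-t),u(-x,-t))$ is a symmetry of the pKN$(-1)$ system is a small addition---the paper merely asserts this consistency without writing out the chain-rule check.
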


\section{Conjugate reduction to the FL equation}\label{sec-4}

The FL equation \eqref{FL-equation-2} is a result of the conjugate reduction of the pKN$(-1)$ system
\eqref{FL-unreduce} by taking $v=u^*$.
Such a reduction can be realized by imposing constraints on $\bs K$ and $\bs L$
(or $\bs K_1$ or $\bs K_2$ for the AKNS-type)
and generates solutions for the FL equation from those solutions of the  pKN$(-1)$ system
that we got in Sec.\ref{sec-3-2}.
In this section, we implement such reductions so that we can obtain solution for the FL equation \eqref{FL-equation-2}.

\subsection{The KP-type solution}\label{sec-4-1}

For those solutions obtained in Sec.\ref{sec-3-2-1} from the KP-type formulation,
we describe the reduction and constraint in the following theorem.

\begin{theorem}\label{thm-2}
	For $(u,v)$ given in \eqref{def-uv-asym},
	the conjugate reduction $v=u^*$ holds under the following constraints:
	\begin{align}\label{conjugate-reduction-FL}
		\bs L=-\bs K^\dagger,~~~~\bs s_1^T=-\mr i\bs r_1^\dagger\bs K^\dagger,~~~~\bs s_2^T=\bs r_2^\dagger,
	\end{align}
	i.e.,
	\begin{align}\label{conjugate-reduction-FL-2}
		l_i=-k_i^*,~~~~\sigma_1(l_i)=-\mr i(\rho_1(k_i))^*k_i^*,~~~~\sigma_2(l_i)=(\rho_2(k_i))^*,~~~~i=1,\cdots,N.
	\end{align}
Here $\bs K^\dag=(\bs K^*)^T$.
\end{theorem}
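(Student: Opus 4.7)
The strategy is to translate the three constraints in \eqref{conjugate-reduction-FL} into a single rank-one perturbation identity relating $\bs M$ to $\bs M^\dagger$, and then to match $u^*$ and $v$ by two short algebraic manipulations: the Sherman--Morrison formula on the $u^*$ side, and a Sylvester-derived scalar identity on the $v$ side.

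First I would rewrite $\bs s_1^T=\mr i\,\bs r_1^\dagger\bs L$ using $\bs L=-\bs K^\dagger$, together with $\bs s_2^T=\bs r_2^\dagger$, and take the Hermitian conjugate of the two Sylvester equations $\bs K\bs M_j-\bs M_j\bs L=\bs r_j\bs s_j^T$. Because $\bs L=-\bs K^\dagger$, the conjugated equations carry the same operator form as the originals. A short comparison of the rank-one right-hand sides (which can also be checked directly from the Cauchy matrix entries \eqref{M-Cauchy-matrix} using $l_i=-k_i^*$) yields $\bs M_2^\dagger=\bs M_2$ and $\bs M_1^\dagger=-\bs K\bs M_1(\bs K^\dagger)^{-1}$. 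Combining these with the Sylvester equation for $\bs M_1$ collapses to the rank-one perturbation
\begin{equation*}
\bs M^\dagger=\bs M+\mr i\,\bs r_1\bs r_1^\dagger.
\end{equation*}
Taking $\dagger$ of $u=\bs r_2^\dagger\bs M^{-1}\bs r_1$ and applying Sherman--Morrison then gives
\begin{equation*}
u^*=\bs r_1^\dagger(\bs M^\dagger)^{-1}\bs r_2
=\frac{\bs r_1^\dagger\bs M^{-1}\bs r_2}{1+\mr i\,\bs r_1^\dagger\bs M^{-1}\bs r_1}.
\end{equation*}

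Next I would attack $v$ with the auxiliary identity obtained by left-multiplying \eqref{Syl-KP} by $\bs M^{-1}$ and right-multiplying by $\bs M^{-1}\bs K^{-1}$, namely
\begin{equation*}
\bs L\bs M^{-1}\bs K^{-1}=\bs M^{-1}-\bs M^{-1}\bs r\bs s^T\bs M^{-1}\bs K^{-1}.
\end{equation*}
Sandwiching this between $\bs r_1^\dagger$ and $\bs r_2$ and expanding $\bs r\bs s^T=\bs r_1\bs s_1^T+\bs r_2\bs s_2^T$ produces a scalar linear relation among the three quantities $\bs r_1^\dagger\bs M^{-1}\bs r_2$, $\bs r_1^\dagger\bs M^{-1}\bs r_1$ and $\bs r_2^\dagger\bs M^{-1}\bs K^{-1}\bs r_2$, together with the numerator $\bs r_1^\dagger\bs K^\dagger\bs M^{-1}\bs K^{-1}\bs r_2$ of $v$. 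Using $\bs s_1^T=\mr i\,\bs r_1^\dagger\bs L$ to collapse the factor $\bs s_1^T\bs M^{-1}\bs K^{-1}\bs r_2$ back to the same quantity as in $v$, solving the resulting linear equation gives exactly $v=\bs r_1^\dagger\bs M^{-1}\bs r_2/(1+\mr i\,\bs r_1^\dagger\bs M^{-1}\bs r_1)$, which equals $u^*$.

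The main obstacle lies in the first step: one must verify that the asymmetric constraint on $\bs s_1^T$ (which carries an extra factor $\bs K^\dagger$ absent from the $\bs s_2^T$ constraint) nevertheless produces a clean rank-one perturbation of $\bs M$, rather than something depending on $\bs K$ in a more tangled way. Once this is secured, either entrywise from \eqref{M-Cauchy-matrix} or via uniqueness of solutions of the Sylvester equation $\bs K X+X\bs K^\dagger=Y$ (on the assumption that the spectra of $\bs K$ and $-\bs K^\dagger$ are disjoint, which is the usual nondegeneracy hypothesis behind the Cauchy matrix scheme), everything afterward reduces to Sherman--Morrison and a single scalar cancellation.
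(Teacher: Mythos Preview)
Your proposal is correct and leads to $v=u^*$, but it proceeds along a different track from the paper's own argument. The paper does not use Sherman--Morrison or the scalar Sylvester identity you isolate; instead it extracts from the constraints the single relation
\[
\bs K\bs M+\bs M^\dagger\bs K^\dagger=\bs r_2\bs r_2^\dagger,
\]
and then verifies directly that $U_{21}^\dagger V_{22}=\mr i\,V_{12}$ by substituting $\bs r_2\bs r_2^\dagger$ in the product $\bs r_1^\dagger\bs M^{-\dagger}\bs r_2\bs r_2^\dagger\bs M^{-1}\bs K^{-1}\bs r_2$, after which the conclusion $u^*=v$ drops out in three lines. Your route, by contrast, first proves the rank-one perturbation $\bs M^\dagger=\bs M+\mr i\,\bs r_1\bs r_1^\dagger$ (which is indeed what holds entrywise; the paper's passing remark that $\bs M=\bs M^\dagger$ appears to be a misprint, though its displayed identity above is correct and is all that is actually used), then evaluates $u^*$ and $v$ separately and matches them. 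Your approach is more modular and makes the mechanism (a rank-one shift of $\bs M$) very transparent; the paper's approach is shorter because it bypasses the intermediate expression $\bs r_1^\dagger\bs M^{-1}\bs r_2/(1+\mr i\,\bs r_1^\dagger\bs M^{-1}\bs r_1)$ entirely and checks the cross-ratio $u^*V_{22}=\mr i V_{12}$ in one stroke.
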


\begin{proof}
First, the dispersion relations in \eqref{dispersion-SDYM} are consistent
with \eqref{conjugate-reduction-FL}.
Then, by applying the conjugate reduction \eqref{conjugate-reduction-FL-2} to \eqref{M-Cauchy-matrix},
one has $\bs M=\bs M^\dag$ and obtains
	\begin{align}
		\bs K\bs M+\bs M^\dagger\bs K^\dagger=\bs r_2\bs r_2^\dagger.
	\end{align}
Through a direct calculation we have
	\begin{align*}
		U_{21}^\dagger V_{22}&=(\bs s_2^T\bs M^{-1}\bs r_1)^\dagger(1-\bs s_2^T\bs M^{-1}\bs K^{-1}\bs r_2) \\
		&=\bs r_1^\dagger\bs M^{-\dagger}\bs r_2-\bs r_1^\dagger\bs M^{-\dagger}\bs r_2\bs r_2^\dagger\bs M^{-1}\bs K^{-1}\bs r_2 \\
		&=-\bs r_1^\dagger\bs K^\dagger\bs M^{-1}\bs K^{-1}\bs r_2=-\mr i\bs s_1^T\bs M^{-1}\bs K^{-1}\bs r_2=\mr iV_{12},
	\end{align*}
which gives rise to $v=u^*$ in light of the formula \eqref{def-uv-asym}.
The proof is completed.

\end{proof}

Based on the results in Sec.\ref{sec-3-2-1},
explicit $N$-soliton solution of the FL equation can be presented as the following.
\begin{theorem}\label{thm-3}
	Suppose
	\bsb
	\begin{align}
		&\bs K=\mr{diag}(k_1,\cdots,k_N),~~~~\bs r_j=(\rho_j(k_1),\cdots,\rho_j(k_N))^T,~~~~j=1,2, \\
		&\bs M=(M_{ij})_{N\times N},~~~~M_{ij}=\frac{\rho_{2}(k_i)(\rho_{2}(k_j))^*-\mr i\rho_{1}(k_i)(\rho_{1}(k_j))^*k_j^*}{k_i+k_j^*},
	\end{align}
	\esb
	where
	\bsb\label{PWF-asym}
	\begin{align}\label{PWF-asym-1}
		\rho_1(k_i)=\exp\left(\frac{1}{2k_i}x+\frac{k_1}{2}t+\lambda_1(k_i)\right),~~~~
\rho_2(k_i)=\exp\left(-\frac{1}{2k_i}x-\frac{k_i}{2}t+\lambda_2(k_i)\right),
	\end{align}
	or alternatively
	\begin{align}
		\label{PWF-asym-2}&\rho_1(k_i)=\exp\left(\frac{1}{k_i}x+k_it+\lambda_1(k_i)\right),~~~~
\rho_2(k_i)=\exp\left(\lambda_2(k_i)\right), \\
		\label{PWF-asym-3}&\rho_1(k_i)=\exp(\lambda_1(k_i)),~~~~
\rho_2(k_i)=\exp\left(-\frac{1}{k_i}x-k_it+\lambda_2(k_i)\right).
	\end{align}
	\esb
Then the following function
	\begin{align}\label{u-con-re}
		u_{[\mr{KP}]}=\bs r_2^\dagger\bs M^{-1}\bs r_1
	\end{align}
provides a $N$-soliton solution of the FL equation \eqref{FL-equation-2}.
\end{theorem}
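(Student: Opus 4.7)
The plan is to deduce Theorem~\ref{thm-3} directly from Theorem~\ref{thm-2} together with the explicit KP-type solution formula \eqref{def-uv-asym} derived in Section~\ref{sec-3-2-1}. By Theorem~\ref{thm-2}, the constraints \eqref{conjugate-reduction-FL}--\eqref{conjugate-reduction-FL-2} force the reduction $v=u^*$, so the first equation of the pKN$(-1)$ system \eqref{FL-unreduce} collapses to the FL equation \eqref{FL-equation-2} and the second becomes its complex conjugate. Hence, provided one shows that the data $(\bs K,\bs r_j,\bs M)$ specified in Theorem~\ref{thm-3} really do arise as a specialization of the KP-type scheme with those constraints imposed, the function $u_{[\mr{KP}]}$ in \eqref{u-con-re} automatically solves the FL equation.

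First, I would verify that the three alternative choices of plane wave factors in \eqref{PWF-asym-1}--\eqref{PWF-asym-3} are compatible with the dispersion relations \eqref{dispersion-SDYM} under the substitution $\bs L=-\bs K^\dag$. These three alternatives correspond exactly to the three admissible choices of $\bs a$ in Remark~\ref{rem-6}: taking $\bs a=P_3=\mr{diag}(1/2,-1/2)$ yields \eqref{PWF-asym-1}, taking $\bs a=P_1=\mr{diag}(1,0)$ yields \eqref{PWF-asym-2}, and taking $\bs a=P_2=\mr{diag}(0,-1)$ yields \eqref{PWF-asym-3}. In each case the constraint $\sigma_j(l_i)=(\rho_j(k_i))^*$ (modulo the factor $-\mr i k_i^*$ required in the $j=1$ slot) is realized by the trivial phase choice $\mu'_j(l_i)=(\lambda'_j(k_i))^*$, and all the diagonal entries of the relevant $\bs a$ are real so that conjugating $\rho_j(k_i)$ exactly produces $\sigma_j(-k_i^*)$ in the correct form.

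Next, I would substitute these data into \eqref{M-Cauchy-matrix} with $\bs M=\bs M_1+\bs M_2$: the combination
\[M_{ij}=\frac{\rho_1(k_i)\sigma_1(l_j)}{k_i-l_j}+\frac{\rho_2(k_i)\sigma_2(l_j)}{k_i-l_j}\]
reduces under $l_j=-k_j^*$, $\sigma_2(l_j)=(\rho_2(k_j))^*$, $\sigma_1(l_j)=-\mr i k_j^*(\rho_1(k_j))^*$ to precisely the stated closed form for $M_{ij}$. The reduction $\bs s_2^T=\bs r_2^\dag$ turns $u=\bs s_2^T\bs M^{-1}\bs r_1$ in \eqref{def-uv-asym} into the stated expression \eqref{u-con-re}. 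An application of Theorem~\ref{thm-2} then yields $v=u^*$, and so the first equation of \eqref{FL-unreduce} is exactly the FL equation.

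The only step requiring care is the consistency check that \eqref{PWF-asym} together with $\bs L=-\bs K^\dag$ and the prescribed $\bs s_j$ simultaneously satisfy both \eqref{dispersion-SDYM} and the conjugate constraint \eqref{conjugate-reduction-FL-2}; this is the main obstacle, because the dispersion for $\bs s$ uses $\bs L^T=-\bs K^*$ rather than $\bs K$ itself, so one has to confirm that conjugating the dispersion for $\bs r_j$ produces exactly the dispersion required for $\bs s_j$. Once this compatibility is recorded, the rest of the proof is a direct combination of Theorems~\ref{thm-1}--\ref{thm-2} with the identifications established in Section~\ref{sec-3-2-1}.
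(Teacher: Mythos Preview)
Your proposal is correct and follows exactly the route the paper intends: Theorem~\ref{thm-3} is stated in the paper without a separate proof, being presented simply as the specialization of the KP-type formula \eqref{def-uv-asym} under the conjugate constraints of Theorem~\ref{thm-2}, with the three plane-wave choices \eqref{PWF-asym-1}--\eqref{PWF-asym-3} arising from the three admissible $\bs a$'s of Remark~\ref{rem-6}. Your write-up makes explicit the identifications (which $P_j$ gives which \eqref{PWF-asym}) and the consistency check between \eqref{dispersion-SDYM} and \eqref{conjugate-reduction-FL-2} that the paper leaves implicit, but the logical structure is identical.
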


\subsection{The AKNS-type solution}\label{sec-4-2}

For the solutions of the pKN$(-1)$ derived in Sec.\ref{sec-3-2-2} from the
Cauchy matrix scheme of the AKNS-type,
their reductions can be describe below.

\begin{theorem}\label{thm-4}
For $(u,v)$ defined in  \eqref{def-uv-sym}, the conjugate reduction $v=u^*$ holds under the following constraints:
	\begin{align}
		\bs K_2=-\bs K_1^\dagger,~~~~\bs s_2=\bs r_1^*,~~~~\bs r_2=\mr i\bs K_1^*\bs s_1^*,
	\end{align}
i.e.,
	\begin{align}\label{4.8}
		l_i=-k_i^*,~~~~\sigma_2(l_i)=(\rho_1(k_i))^*,~~~~\rho_2(l_i)=\mr ik_i^*(\sigma_1(k_i))^*.
	\end{align}
\end{theorem}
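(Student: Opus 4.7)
The plan is to imitate the structure of the proof of Theorem \ref{thm-2}, i.e., (i) verify the reductions are consistent with the two Sylvester equations in \eqref{Syl-AKNS} and extract the resulting Hermitian-type identities for $\bs M_1$ and $\bs M_2$, then (ii) use these identities to massage the closed-form expressions in \eqref{def-uv-sym} until the relation $v=u^*$ drops out. Compatibility with the dispersion relations \eqref{dispersion-AKNS} is immediate once we check that $\bs K_2=-\bs K_1^\dagger$ together with the twisted relations on $\bs r_2,\bs s_2$ maps the equation for $(\bs r_2,\bs s_2)_{x_n}$ into the complex conjugate of the one for $(\bs s_1,\bs r_1)_{x_n}$.

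Substituting the constraints into the two Sylvester equations (3.6) in the AKNS scheme, the first becomes the Hermitian Lyapunov equation
\begin{equation*}
\bs K_1\bs M_1+\bs M_1\bs K_1^\dagger=\bs r_1\bs r_1^\dagger,
\end{equation*}
so uniqueness of solution forces $\bs M_1=\bs M_1^\dagger$. The second collapses, after pulling out an overall factor of $\bs K_1^*$, to
\begin{equation*}
\bs K_1^*\bs W+\bs W\bs K_1=\bs s_1^*\bs s_1^T, \qquad \bs M_2=-\mr i\bs K_1^*\bs W,
\end{equation*}
so $\bs W$ is also Hermitian. The key structural consequence I will use is a rank-one relationship between the two matrices that appear in the formulas for $u$ and for $v_2,v_4$: with $\bs F:=\bs I_N-\bs M_1\bs M_2=\bs I_N+\mr i\bs M_1\bs K_1^*\bs W$ and $\bs G:=\bs I_N-\bs M_2\bs M_1=\bs I_N+\mr i\bs K_1^*\bs W\bs M_1$, the Sylvester identity for $\bs W$ gives
\begin{equation*}
\bs G-\bs F^\dagger=\mr i(\bs K_1^*\bs W+\bs W\bs K_1)\bs M_1=\mr i\bs s_1^*\bs s_1^T\bs M_1.
\end{equation*}

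With this in hand, the final step is direct algebra. Taking the conjugate of $u=\bs s_1^T\bs F^{-1}\bs r_1$ and using the Hermiticity of $\bs M_1$ and $\bs W$ yields $u^*=\bs r_1^\dagger\bs F^{-\dagger}\bs s_1^*$. Substituting the reductions of $\bs s_2,\bs K_2,\bs r_2$ into the formula \eqref{def-uv-sym} rewrites $\mr iv_2$ and $v_4$ as $-\bs r_1^\dagger\bs G^{-1}\bs s_1^*$ and $1-\mr i\bs s_1^T\bs M_1\bs G^{-1}\bs s_1^*$ respectively (also using $\bs F^{-1}\bs M_1=\bs M_1\bs G^{-1}$, which follows from $\bs M_1\bs G=\bs F\bs M_1$). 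A single application of the Sherman--Morrison formula to $\bs G^{-1}=(\bs F^\dagger+\mr i\bs s_1^*\bs s_1^T\bs M_1)^{-1}$ then expresses both $\mr iv_2$ and $v_4$ as scalar multiples of the common factor $(1+\mr i\bs s_1^T\bs M_1\bs F^{-\dagger}\bs s_1^*)^{-1}$, and the ratio $\mr iv_2/v_4$ collapses exactly to $\bs r_1^\dagger\bs F^{-\dagger}\bs s_1^*=u^*$.

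The computations themselves are routine; the real work is in Step (i), namely spotting the rank-one perturbation linking $\bs F^\dagger$ and $\bs G$. This is the place where the specific twists in the constraints, particularly the factor $\mr i\bs K_1^*$ in $\bs r_2=\mr i\bs K_1^*\bs s_1^*$, have to be exactly right; any other normalization would leave an uncancelled residue after the Sherman--Morrison step and fail to give $v=u^*$. Careful bookkeeping of the factors of $\mr i$ and of $\bs K_1^*$ vs.\ $\bs K_1^\dagger$ (which coincide here only because $\bs K_1$ is diagonal) is therefore the main thing to watch.
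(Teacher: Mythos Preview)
Your proposal is correct and follows essentially the same approach as the paper: both derive $\bs M_1=\bs M_1^\dagger$ and a conjugation identity for $\bs M_2$ from the reduced Sylvester equations, then exploit the resulting rank-one structure to collapse the ratio $\mr iv_2/v_4$ to $u^*$. The only organizational difference is that the paper computes the product $u_3^\dagger v_4$ directly and shows it equals $\mr iv_2$, whereas you invert $\bs G=\bs F^\dagger+\mr i\bs s_1^*\bs s_1^T\bs M_1$ via Sherman--Morrison; these are equivalent manipulations of the same identity.
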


\begin{proof}
	It is not difficult to verify the following Sylvester equations holds for the settings in \eqref{def-sym}:
	\begin{align}
		\bs K_2\bs M_2-\bs M_2\bs K_1=\bs r_2\bs s_1^T.
	\end{align}
Then, by applying \eqref{4.8} to \eqref{M1} and \eqref{M2}, one obtains
	\begin{align}
		\bs M_1^\dagger=\bs M_1,~~~~
\bs M_2^\dagger=-\bs K_2^{-1}\bs M_2\bs K_2^\dagger=\bs K_2^{-1}\bs M_2\bs K_1.
	\end{align}
	A direct calculation yields
	\begin{align*}
		u_3^\dagger v_4&=(\bs s_1^T(\bs I_N-\bs M_1\bs M_2)^{-1}\bs r_1)^\dagger
		(1+\bs s_1^T(\bs I_N-\bs M_1\bs M_2)^{-1}\bs M_1\bs K_2^{-1}\bs r_2) \\
		&=\bs r_1^\dagger(\bs I_N-\bs M_2^\dagger\bs M_1^\dagger)^{-1}\bs s_1^*-\mr i\bs r_1^\dagger
(\bs I_N-\bs M_2^\dagger\bs M_1^\dagger)^{-1}\bs K_2^{-1}\bs r_2\bs s_1^T
\bs M_1(\bs I_N-\bs M_2\bs M_1)^{-1}\bs K_2^{-1}\bs r_2 \\
		&=-\mr i\bs r_1^\dagger(\bs I_N-\bs M_2^\dagger\bs M_1^\dagger)^{-1}
(\bs I_N-\bs K_2^{-1}\bs M_2\bs K_1\bs M_1)(\bs I_N-\bs M_2\bs M_1)^{-1}\bs K_2^{-1}\bs r_2 \\
		&=-\mr i\bs s_2^T(\bs I_N-\bs M_2\bs M_1)^{-1}\bs K_2^{-1}\bs r_2=\mr iv_2,
	\end{align*}
	which completes the proof in the light of \eqref{def-uv-sym}.
\end{proof}

Explicit $N$-soliton solution formula of the FL equation can be presented below.

\begin{theorem}\label{thm-5}
The following function
	\begin{align}\label{u-con-re-2}
		u_{[\mr{AKNS}]}=\bs s_1^T(\bs I_N-\bs M_1\bs M_2)^{-1}\bs r_1
	\end{align}
gives a $N$-soliton solution of the FL equation \eqref{FL-equation-2},
where
	\bsb
	\begin{align}
		&\bs K=\mr{diag}(k_1,\cdots,k_N), \\
		&\bs r_1=(\rho_1(k_1),\cdots,\rho_1(k_N))^T,~~~~\bs s_1=(\sigma_1(k_1),\cdots,\sigma_1(k_N))^T,\\
		&\bs M_1=(M_{1,ij})_{N\times N},~~~~M_{1,ij}=\frac{\rho_1(k_i)(\rho_1(k_j))^*}{k_i+k_j^*}, \\
		&\bs M_2=(M_{2,ij})_{N\times N},~~~~M_{2,ij}=-\frac{\mr ik_i^*(\sigma_1(k_i))^*\sigma_1(k_j)}{k_i^*+k_j},
	\end{align}
	\esb
the plane wave factors are give by
	\bsb
	\begin{align}\label{PWF-sym-1}
		\rho_1(k_i)=\exp\left(\frac{1}{2k_i}x+\frac{k_i}{2}t+\lambda_1(k_i)\right),  ~~~~
\sigma_1(k_i)=\exp\left(\frac{1}{2k_i}x+\frac{k_i}{2}t+\mu_1(k_i)\right),
	\end{align}
	or alternatively,
	\begin{align}
		\label{PWF-sym-2}&\rho_1(k_i)=\exp\left(\frac{1}{k_i}x+k_it+\lambda_1(k_i)\right),~~~~
\sigma_1(k_i)=\exp\left(\mu_1(k_i)\right), \\
		\label{PWF-sym-3}&\rho_1(k_i)=\exp(\lambda_1(k_i)),~~~~
\sigma_1(k_i)=\exp\left(\frac{1}{k_i}x+k_it+\mu_1(k_i)\right).
	\end{align}
\esb

\end{theorem}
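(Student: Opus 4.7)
The plan is to deduce Theorem \ref{thm-5} as an explicit realization of the conjugate reduction already established in Theorem \ref{thm-4}. First I would observe that the FL equation \eqref{FL-equation-2} is precisely what \eqref{FL-unreduce-1} becomes under $v = u^*$, so any pKN$(-1)$ solution satisfying that conjugacy automatically yields an FL solution. Since $(u,v)$ defined in \eqref{def-uv-sym} already solves the pKN$(-1)$ system by the AKNS-type construction of Sec.~\ref{sec-3-2-2}, the task reduces to implementing the reduction constraints \eqref{4.8} of Theorem~\ref{thm-4} in a way that is consistent with the dispersion \eqref{dispersion-AKNS}, and then rewriting the component $u_3$ in its reduced form.

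Next I would substitute $l_i = -k_i^*$, $\sigma_2(l_i) = (\rho_1(k_i))^*$ and $\rho_2(l_i) = \mr i k_i^*(\sigma_1(k_i))^*$ directly into the entries of $\bs M_1$ in \eqref{M1} and $\bs M_2$ in \eqref{M2}. Together with $\bs K_2 = -\bs K_1^\dag$, this substitution produces verbatim the matrices $\bs M_1, \bs M_2$ stated in the theorem. Since $\bs K = \bs K_1$, $\bs r_1$ and $\bs s_1$ are unaltered by the reduction, the formula $u_3 = \bs s_1^T(\bs I_N - \bs M_1 \bs M_2)^{-1}\bs r_1$ in \eqref{def-uv-sym} is unchanged, and Theorem~\ref{thm-4} ensures $v = u^*$. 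Hence $u_{[\mr{AKNS}]}$ as defined by \eqref{u-con-re-2} solves the FL equation, which is the claim.

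The remaining verification is that each of the plane wave families \eqref{PWF-sym-1}--\eqref{PWF-sym-3} is simultaneously compatible with the dispersion \eqref{dispersion-AKNS} for an admissible $\bs a$ and with the conjugate constraints. For \eqref{PWF-sym-1} one takes $\bs a = \sigma_3$, so that both $\rho_1(k_i)$ and $\sigma_1(k_i)$ carry the exponent $\tfrac{1}{2k_i}x + \tfrac{k_i}{2}t$; a direct substitution shows that $\sigma_2(-k_i^*)$ and $\rho_2(-k_i^*)$ recovered through \eqref{4.8} are precisely the dispersion-consistent plane waves at $l=-k^*$. The alternatives \eqref{PWF-sym-2} and \eqref{PWF-sym-3} correspond, via Remark~\ref{rem-6}, to $\bs a = P_1$ and $\bs a = P_2$, which concentrate the $(x,t)$-dependence entirely in $\rho_1$ or in $\sigma_1$ and absorb the rest into the phase factors. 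Since only the product $\rho_1(k_i)\sigma_1(k_j)$ enters the entries of $\bs M_1 \bs M_2$, all three choices produce the same soliton up to phase redefinitions of $\lambda_1, \mu_1$.

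The step that I expect to demand the most care is this last one: keeping track of signs and factors of $1/2$ so that the plane-wave exponents under $l_i \mapsto -k_i^*$ correctly become complex conjugates for each admissible $\bs a$. Beyond that book-keeping no fresh analytical input is required, because Theorem~\ref{thm-4} already supplies the conjugate reduction and Sec.~\ref{sec-3-2-2} already supplies the underlying pKN$(-1)$ solution.
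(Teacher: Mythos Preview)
Your proposal is correct and matches the paper's approach: the paper states Theorem~\ref{thm-5} immediately after Theorem~\ref{thm-4} without a separate proof, treating it as the explicit specialization obtained by inserting the conjugate constraints \eqref{4.8} into the AKNS-type data of Sec.~\ref{sec-3-2-2}. Your outline---reduce to Theorem~\ref{thm-4}, substitute into \eqref{M1}--\eqref{M2}, and verify the three plane-wave choices via $\bs a=\sigma_3,P_1,P_2$ and Remark~\ref{rem-6}---is exactly this implicit argument made explicit.
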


\subsection{Nonlocal reduction and multiple-pole solution}\label{sec-4-3}

Both the KP-type and AKNS-type solutions for the pKN$(-1)$ system
admit a nonlocal reduction
\begin{equation}\label{uv-nonl}
v(x,t)=-\mr iu^*(-x,-t),
\end{equation}
which gives rise to a nonlocal FL equation from the pKN$(-1)$ system:
\begin{align}
	u_{xt}-u+2u^*(-x,-t)uu_x=0.
\end{align}
The reductions to get solutions of this equation are given by
\begin{itemize}
	\item \textbf{Nonlocal reduction for the KP-type:}
	\begin{align}
		\bs L=\bs K^\dagger,~~~~\bs s_1^T=\bs r_1^\dagger(-x,-t)\bs K^\dagger,~~~~
\bs s_2^T=\bs r_2^\dagger(-x,-t),
	\end{align}
	
	\item \textbf{Nonlocal reduction for the AKNS-type:}
	\begin{align}
		\bs K_2=\bs K_1^\dagger,~~~~\bs s_2=\bs r_1^*(-x,-t),~~~~
\bs r_2=-\bs K_1^*\bs s_1^*(-x,-t).
	\end{align}
\end{itemize}
The proof is similar to Theorem \ref{thm-2} and \ref{thm-4} and we skip it.

The Cauchy matrix approach can be used to derive not only $N$-soliton solutions
by considering $\bs K$ as a diagonal matrix, but also  multiple-pole solution,
in which $\bs K$ and $\bs L$ ($\bs K_1$ and $\bs K_2$ in the AKNS-type)
are assumed to be   Jordan block matrices.
The construction of multiple-pole solution had been fully discussed for the case of the SDYM equation
in our previous work \cite{LSS-2023},
one can refer to it.
As an additional contribution to the completeness of this paper,
we will provide the multiple-pole solution of FL equation in appendix \ref{App-1},
 where we will  only consider the KP-type case, while the construction 
 for the AKNS-type multiple-pole solutions is similar.

\section{Concluding remarks}\label{sec-5}

In this paper we have shown how the unreduced FL system, i.e. the pKN$(-1)$ system \eqref{FL-unreduce},
arose from the reduction of  the general SDYM equation \eqref{Miura-JK}.
The reduction was shown to be realized in the two Cauchy matrix schemes,
namely, the KP-type and the AKNS-type.
Consequently, two types of solutions of the pKN$(-1)$ system were constructed,
which turn out to equivalent under certain reflection transformation of coordinates.
These solutions allow further (conjugate) reductions and at last yield solutions for the FL equation \eqref{FL-equation-2}.

It can be identified that our solutions for the FL equation are the same as those obtained
by Matsuno from bilinear approach \cite{Matsuno-2012}.
One can refer to Appendix \ref{App-2} of this paper for more details.
In addition, in \cite{Ye-2023},  vectorial Darboux transformation via bidifferential graded algebra techniques
was employed to construct solutions for the unreduced FL system.
Their solution (see Theorem 4.2 in \cite{Ye-2023})
also coincides with our KP-type Cauchy matrix solution with $\bs a=\sigma_3$.
However, it seems not easy to identify the relation between the solutions  in Cauchy matrix form of this paper
and in  double Wronskian form obtained in \cite{Liu-2022}.
Besides solitons, based on the Cauchy matrix approach, one can construct more solutions other than
solitons, for example, multiple-pole solutions.
In Appendix \ref{App-1}, some formulae of multiple-pole solutions for the  pKN$(-1)$ system
and the FL equation are presented.

In this paper, apart from the  pKN$(-1)$ system, we also derived the AKNS$(-1)$ system \eqref{AKNS-1},
which naturally appears as the $K$-formulation of the SDYM equation under
a 2-dimensional reduction.
Note that there is a Riccati-type Miura transformation to connect the AKNS$(-1)$ system
and the  pKN$(-1)$ system:
\begin{align}\label{Riccati-Miura}
	K_{12}=uv^2-\mr iv_t,
\end{align}
which was first found in \cite{Ye-2023}
and played a crucial role in \cite{Ye-2023} in constructing solutions for the FL equation.
Actually, \eqref{Riccati-Miura} is nothing but  \eqref{thm-1-process} in the proof of Theorem \ref{thm-1},
which naturally appears from the Miura transformation \eqref{Miura-pKN}.

The paper added an important example to Ward's conjecture on the reduction of the SDYM equation.
It may indicate the possibility of other types of solutions for the FL equation in light of such reductions.
For example, the SDYM equation is famous for instantons \cite{Atiyah-1977}
and the Atiyah-Hitchin-Drinfield-Manin (AHDM) ansatz \cite{Atiyah-1978};
it also admits quasi-Wronskian type solutions represented using quasideterminant \cite{Hamanaka-2020,Huang-2021}.
Whether these solutions could be reformulated in the Cauchy matrix approach and
be reduced to the lower dimensional cases would be  an interesting topic.
In addition, our research conducted in this paper also indicates the Cauchy matrix structure of the
Kaup-Newell hierarchy, which will be investigated separately.
Finally, considering the Cauchy matrix approach is also a powerful tool to implement
integrable discretization, e.g. \cite{CWZ-TMP-2023},
one can consider discretization of some equations in the  Kaup-Newell hierarchy (cf.\cite{NCQ-1983-PLA}).
Compared with the discretization of the AKNS hierarchy,
this is not well understood in literature.

\vskip 20pt
\subsection*{Acknowledgments}

This project is supported by the NSFC grant (Nos. 12271334, 12411540016 and 12201329).

\appendix

\section{Construction of the KP-type multiple-pole solutions }\label{App-1}

The construction for multiple-pole solutions is much more complicated.
We start our construction by introducing the following
$N$th-order lower triangle Toeplitz matrix generated by a function $a(k)$:
\begin{align}\label{def-F}
	\bs F_k^{[N]}[a(k)]=
	\begin{pmatrix}
		a & 0 & 0 & \cdots & 0 \\
		\frac{\partial_ka}{1!} & a & 0 & \cdots & 0 \\
		\frac{\partial_k^2a}{2!} & \frac{\partial_ka}{1!} & a & \cdots & 0 \\
		\vdots & \vdots & \vdots & \ddots & \vdots \\
		\frac{\partial_k^{N-1}a}{(N-1)!} & \frac{\partial_k^{N-2}a}{(N-2)!} &
\frac{\partial_k^{N-3}a}{(N-3)!} & \cdots & a
	\end{pmatrix}_{N\times N}.
\end{align}
Note that the set of all nonsingular $N$-order lower triangle Toeplitz matrices compose
an Abelian group.
Thus we have the following commutative relation
\begin{align*}
	\bs F_k^{[N]}[a(k)]\bs F_l^{[N]}[b(l)]=\bs F_l^{[N]}[b(l)]\bs F_k^{[N]}[a(k)].
\end{align*}
It is notable that by setting $a(k)=k$ in \eqref{def-F},
matrix $\bs F_k^{[N]}[a(k)]$ yields a $N$th-order Jordan block matrix of $k$, which is presented as
\begin{align*}
	\bs F_k^{[N]}[k]=\bs J^{[N]}[k]=
	\begin{pmatrix}
		k & 0 & 0 & \cdots & 0 \\
		1& k& 0 & \cdots & 0 \\
		0& 1& k & \cdots & 0 \\
		\vdots & \vdots & \vdots & \ddots & \vdots \\
		0 & 0& 0 & \cdots & k
	\end{pmatrix}_{N\times N}.
\end{align*}

Then we introduce a lemma as follows.

\begin{lemma}\label{lem-syl}
	For the Sylvester equation
	\begin{align}
		\bs K\bs M-\bs M\bs L=\bs r\bs s^T,
	\end{align}
	where $\bs K,\bs L,\bs r,\bs s^T$ are defined as
	\bsb
	\begin{align}
		&\bs K=\bs J^{[N]}[k],~~~~\bs L^T=\bs J^{[M]}[l], ~~~~
		\bs R=\bs F_k^{[N]}[\rho(k)],~~~~\bs S=\bs F_l^{[M]}[\sigma(l)], \\
		&\bs r=\bs R\mf e_N=\bs F_k^{[N]}[\rho(k)]\mf e_N,~~~~
\bs s=\bs S\mf e_M=\bs F_l^{[M]}[\sigma(l)]\mf e_M,~~~~
		\mathbf e_{N}=(\underbrace{1,0,0,\cdots,0}_{N-\mathrm{dimensional}})^T,
	\end{align}
	\esb
its solution matrix $\bs M$ can be formulated as
	\begin{align}\label{M-expand}
		\bs M=\bs R\bs G\bs S^T=\bs F_k^{[N]}[\rho(k)]\cdot\bs G\cdot\left(\bs F_l^{[M]}[\sigma(l)]\right)^T,
	\end{align}
	where $\bs G$ is a matrix determined by $\bs K$ and $\bs L$:
	\begin{align}\label{G-expand}
		\bs G=(g_{i,j})_{N\times M},~~~~
		g_{i,j}=
		\begin{pmatrix}
			i+j-2 \\
			i-1
		\end{pmatrix}
		\frac{(-1)^{i-1}}{(k-l)^{i+j-1}},
	\end{align}
	where
	\begin{align}
		\begin{pmatrix}
			n \\
			m
		\end{pmatrix}
		=
		\frac{n!}{m!(n-m)!},~~~~
		n\geq m,~~~~
		m,n\in\mb Z^+.
	\end{align}
\end{lemma}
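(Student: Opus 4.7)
The plan is to split the proof into two logically separate steps: first reduce the Sylvester equation for $\bs M$ to a much simpler Sylvester equation for the core matrix $\bs G$, and then verify the closed form \eqref{G-expand} by a direct entry-wise calculation.

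For the reduction step, I would use the well-known fact that the set of $N$th-order lower triangular Toeplitz matrices forms a commutative subalgebra. In particular, $\bs R=\bs F_k^{[N]}[\rho(k)]$ commutes with $\bs K=\bs J^{[N]}[k]$, and $\bs S=\bs F_l^{[M]}[\sigma(l)]$ commutes with $\bs L^T=\bs J^{[M]}[l]$; transposing the latter gives $\bs S^T\bs L=\bs L\bs S^T$. Substituting the ansatz $\bs M=\bs R\bs G\bs S^T$ into $\bs K\bs M-\bs M\bs L=\bs r\bs s^T$ and pulling $\bs R$ and $\bs S^T$ outside, together with the identities $\bs r=\bs R\mf e_N$ and $\bs s^T=\mf e_M^T\bs S^T$, reduces the original equation to
\begin{equation*}
\bs R\bigl(\bs K\bs G-\bs G\bs L\bigr)\bs S^T=\bs R\,\mf e_N\mf e_M^T\,\bs S^T.
\end{equation*}
Since $\bs R$ and $\bs S$ are invertible lower triangular Toeplitz matrices (assuming $\rho(k),\sigma(l)\neq 0$), this is equivalent to the normalized Sylvester equation $\bs K\bs G-\bs G\bs L=\mf e_N\mf e_M^T$.

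For the verification step, I would compute the $(i,j)$-entry of $\bs K\bs G-\bs G\bs L$ directly using the Jordan structure of $\bs K$ and $\bs L$. This yields $(k-l)g_{ij}+g_{i-1,j}-g_{i,j-1}$, with the boundary conventions $g_{0,j}=g_{i,0}=0$. Substituting the explicit formula $g_{ij}=\binom{i+j-2}{i-1}\frac{(-1)^{i-1}}{(k-l)^{i+j-1}}$ and collecting the common factor $\frac{(-1)^{i-1}}{(k-l)^{i+j-2}}$, the required identity collapses to the Pascal rule
\begin{equation*}
\binom{i+j-2}{i-1}=\binom{i+j-3}{i-1}+\binom{i+j-3}{i-2}
\end{equation*}
for $i,j\geq 2$, while the boundary cases ($i=1$ and/or $j=1$) are immediate telescopings and the $(1,1)$-entry equals exactly $1$, matching the right-hand side $\mf e_N\mf e_M^T$.

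I do not expect a serious obstacle here: the reduction step is purely algebraic once the Toeplitz commutativity is invoked, and the verification is bookkeeping around Pascal's identity. The only point requiring care is the consistent handling of the alternating sign $(-1)^{i-1}$ and the boundary entries at $i=1$ or $j=1$; a useful mnemonic that avoids any sign confusion is to observe that $g_{ij}=\tfrac{1}{(i-1)!(j-1)!}\partial_k^{i-1}\partial_l^{j-1}\tfrac{1}{k-l}$, so that the identity $\bs K\bs G-\bs G\bs L=\mf e_N\mf e_M^T$ can alternatively be obtained by differentiating the scalar relation $(k-l)\cdot\tfrac{1}{k-l}=1$ in $k$ and $l$, which mechanically produces the Pascal-type recursion.
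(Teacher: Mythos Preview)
Your proposal is correct and follows essentially the same route as the paper: the reduction step via Toeplitz commutativity and invertibility of $\bs R,\bs S$ is identical to what the paper does, arriving at the normalized equation $\bs K\bs G-\bs G\bs L=\mf e_N\mf e_M^T$. The only difference is that the paper then cites \cite{Zhang-2013} for the explicit form of $\bs G$, whereas you supply a self-contained entry-wise verification via Pascal's rule (and the differentiation mnemonic); this is a welcome elaboration rather than a different strategy.
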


\begin{proof}
	Firstly, with \eqref{M-expand} one can rewrite the Sylvester equation as
	\begin{align}\label{Process-Syl}
		\bs K\bs R\bs G\bs S^T-\bs R\bs G\bs S^T\bs L=\bs R\mf e_N\mf e_M^T\bs S^T,
~~~~\Rightarrow~~~~
		\bs R(\bs K\bs G-\bs G\bs L)\bs S^T=\bs R\mf e_N\mf e_M^T\bs S^T.
	\end{align}
Thus $\bs G$ satisfies the following Sylvester equation
	\begin{align}\label{Syl-G}
		\bs K\bs G-\bs G\bs L=\mf e_N\mf e_M^T,
	\end{align}
which has been solved in \cite{Zhang-2013}, where the solution is given by \eqref{G-expand}.
	
\end{proof}

Then we have the following theorem, which devotes to constructing
the KP-type multiple-pole solution of pKN$(-1)$ system.

\begin{theorem}\label{thm-multi-pole}
	To derive multiple-pole solution via Cauchy matrix approach, we  set
	\bsb\label{KL-multi-pole}
	\begin{align}
		&\bs K=\bs J^{[N]}[k],~~~~\bs L^T=\bs J^{[N]}[l],~~~~
		\bs R_j=\bs F_k^{[N]}[\rho_j(k)],~~~~\bs S_j=\bs F_l^{[N]}[\sigma_j(l)],~~~~j=1,2, \\
		&\bs r_j=\bs R_j\mf e_N=\bs F_k^{[N]}[\rho_j(k)]\mf e_N,~~~~
\bs s_j=\bs S_j\mf e_N=\bs F_l^{[N]}[\sigma_j(l)]\mf e_N,
	\end{align}
where $\rho_j(k)$ and $\sigma_j(l)$ are defined as in \eqref{def-RS}.
	\esb
	Then $\bs M$ can be constructed as:
	\begin{align}\label{M-multipole}
		\bs M=\bs R_1\bs G\bs S_1^T+\bs R_2\bs G\bs S_2^T,
	\end{align}
	where $\bs G$ follows the expression \eqref{G-expand}.
	By definition \eqref{def-uv-asym} and \eqref{KL-multi-pole},
$(u,v)$ solves the pKN$(-1)$ system.
\end{theorem}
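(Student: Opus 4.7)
The goal is to verify that the Jordan-block/Toeplitz ansatz fits into the Cauchy matrix machinery already developed in Section \ref{sec-3-1}, so that the general reduction argument automatically produces a solution of the pKN$(-1)$ system. Since Theorem \ref{thm-1} and the construction in Sec.\ref{sec-3-2-1} reduce everything to two conditions—namely, that $\bs M$ solves the KP-type Sylvester equation \eqref{Syl-KP} and that $\bs r,\bs s$ satisfy the dispersion relations \eqref{dispersion-SDYM}—the plan is to check only these two items for the Jordan-block data in \eqref{KL-multi-pole}.

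First I would split $\bs M=\bs M_1+\bs M_2$ with $\bs M_j\doteq \bs R_j\bs G\bs S_j^T$ and apply Lemma \ref{lem-syl} separately to each piece. The lemma, taken with $M=N$ and the identifications $\rho=\rho_j$, $\sigma=\sigma_j$, yields $\bs K\bs M_j-\bs M_j\bs L=\bs r_j\bs s_j^T$ for $j=1,2$. Summing gives the full Sylvester equation $\bs K\bs M-\bs M\bs L=\bs r\bs s^T$ with $\bs r=(\bs r_1,\bs r_2)$, $\bs s=(\bs s_1,\bs s_2)$ exactly as required by \eqref{Syl-KP}. The central algebraic fact being used here is that $\bs G$ depends only on $\bs K$ and $\bs L$ and is uniquely determined by the Sylvester equation \eqref{Syl-G} with right-hand side $\mf e_N\mf e_N^T$, so conjugating by the Toeplitz factors $\bs R_j$ and $\bs S_j^T$ reproduces the original equation with the dressed right-hand side $\bs r_j\bs s_j^T$.

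Next I would verify the dispersion relations. The key observation is that $N$th-order lower-triangular Toeplitz matrices generated at the point $k$ form a commutative algebra with multiplication rule $\bs F_k^{[N]}[a(k)]\bs F_k^{[N]}[b(k)]=\bs F_k^{[N]}[a(k)b(k)]$; in particular $\bs K^n=\bs F_k^{[N]}[k^n]$, so $\bs K^n\bs R_j=\bs F_k^{[N]}[k^n\rho_j(k)]$. Since $\partial_{x_n}\rho_j(k)=a_j k^n\rho_j(k)$ by \eqref{PWF-FL}--\eqref{L}, differentiating the Toeplitz entries of $\bs R_j\mf e_N$ entry-by-entry gives $\partial_{x_n}\bs r_j=a_j\bs K^n\bs r_j$, which assembled over $j=1,2$ is precisely $\bs r_{x_n}=\bs K^n\bs r\bs a$ with $\bs a=\mathrm{diag}(a_1,a_2)$. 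The corresponding computation on the $\bs L$-side works identically because $\bs L^T=\bs J^{[N]}[l]$ is itself a Jordan block and $\partial_{x_n}\sigma_j(l)=-a_j l^n\sigma_j(l)$, giving $\bs s_{x_n}=-(\bs L^T)^n\bs s\bs a$.

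With both hypotheses in place, the derivation in Sec.\ref{sec-3-1} applies verbatim: the setting \eqref{def-UV} together with \eqref{3.13} recovers the general SDYM relations \eqref{JK-w}, \eqref{Miura-JK}, and then Theorem \ref{thm-1} gives that $(u,v)$ defined in \eqref{def-uv-asym} solves the pKN$(-1)$ system. The main obstacle I anticipate is keeping the commutation bookkeeping straight between the Jordan block $\bs K$ and the Toeplitz generators $\bs R_j$—in particular confirming that differentiating the Toeplitz ansatz in $x_n$ genuinely amounts to multiplication by $\bs K^n$ on the left; once that identification is secured via the $\bs F_k^{[N]}[a(k)]\bs F_k^{[N]}[b(k)]=\bs F_k^{[N]}[ab]$ rule, the rest is a direct appeal to previously proven results.
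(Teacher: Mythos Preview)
Your proposal is correct and matches the paper's (largely implicit) argument: the paper states Theorem \ref{thm-multi-pole} immediately after proving Lemma \ref{lem-syl} and clearly intends it as a direct consequence of applying that lemma componentwise for $j=1,2$ and then invoking the general Cauchy-matrix framework of Sec.\ref{sec-3-1}--\ref{sec-3-2-1}. Your explicit verification of the dispersion relations via the Toeplitz multiplication rule $\bs F_k^{[N]}[a]\bs F_k^{[N]}[b]=\bs F_k^{[N]}[ab]$ is a detail the paper omits but is exactly the right ingredient to close the argument.
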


\begin{remark}\label{rem-KM+ML}
	For the Sylvester equation
	\begin{align*}
		\bs K\bs G+\bs G\bs L=\mf e_N\mf e_M^T,
	\end{align*}
	where $\bs K=\bs J^{[N]}[k]$ and $\bs L^T=\bs J^{[M]}[l]$, solution $\bs G$ is constructed as
	\begin{align}\label{Rem-G}
		\bs G=(g_{i,j})_{N\times M},~~~~
		g_{i,j}=
		\begin{pmatrix}
			i+j-2 \\
			i-1
		\end{pmatrix}
		\frac{(-1)^{i+j}}{(k+l)^{i+j-1}}.
	\end{align}
	If one replace $\bs G$ in Theorem \ref{thm-multi-pole} with \eqref{Rem-G},
then $\bs M$ solves the Sylvester equation
	\begin{align*}
		\bs K\bs M+\bs M\bs L=\bs r\bs s^T.
	\end{align*}
\end{remark}

As for the explicit formula of multiple-pole solution of the FL equation, we have the following theorem.
\begin{theorem}\label{thm-mullti-pole-FL}
	To derive multiple-pole solution of the FL equation, we  set
	\begin{align}
		\bs K=\bs J^{[N]}[k],~~~~\bs R_j=\bs F_k^{[N]}[\rho_j(k)],~~~~\bs r_j=\bs R_j\mf e_N,~~~~j=1,2,
	\end{align}
where $\rho_j(k)$ is defined as in \eqref{PWF-asym-1}.
	The reduced Cauchy matrix $\bs M$ is constructed by
	\begin{align}\label{M-reduced}
		\bs M=\bs R_1\bs G\bs R_1^\dagger-\mr i\bs R_2\bs G\bs R_2^\dagger\bs K^\dagger,
	\end{align}
	where
	\begin{align}\label{G-reduce}
		\bs G=(g_{i,j})_{N\times N},~~~~
		g_{i,j}=
		\begin{pmatrix}
			i+j-2 \\
			i-1
		\end{pmatrix}
		\frac{(-1)^{i+j}}{k+k^*}.
	\end{align}
Then \eqref{u-con-re} provides multiple-pole solution for the FL equation \eqref{FL-equation-2}.
\end{theorem}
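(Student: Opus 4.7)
The plan is to combine the multiple-pole construction for the pKN$(-1)$ system in Theorem \ref{thm-multi-pole} with the conjugate reduction to the FL equation in Theorem \ref{thm-2}, while respecting the commutative algebra of lower-triangular Toeplitz matrices built from a single Jordan block.

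First, I would handle the Sylvester equation for $\bs G$. In Theorem \ref{thm-multi-pole} the pair $(\bs K,\bs L)$ with $\bs L^T=\bs J^{[N]}[l]$ is arbitrary, but the conjugate reduction forces $\bs L=-\bs K^{\dagger}$. With $\bs K=\bs J^{[N]}[k]$, this yields $\bs L^T=-(\bs J^{[N]}[k])^{*}$, so the defining Sylvester equation $\bs K\bs G-\bs G\bs L=\mf e_N\mf e_N^T$ becomes $\bs K\bs G+\bs G\bs K^{\dagger}=\mf e_N\mf e_N^T$. This is exactly the setup of Remark \ref{rem-KM+ML} with $l$ played by $k^*$, so its solution is given by the formula \eqref{G-reduce} (understood with the power $(k+k^*)^{i+j-1}$ in the denominator).

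Next I would convert the vector-level reductions $\bs s_2^T=\bs r_2^{\dagger}$ and $\bs s_1^T=-\mr i\bs r_1^{\dagger}\bs K^{\dagger}$ of Theorem \ref{thm-2} into matrix-level Toeplitz identities. The key fact is that all lower-triangular Toeplitz matrices generated by analytic functions of $k$ form a commutative algebra closed under complex conjugation, and that $\bs F_k^{[N]}[a]\,\bs F_k^{[N]}[b]=\bs F_k^{[N]}[ab]$. Since $\bs r_j=\bs R_j\mf e_N$, one obtains $\bs s_2=\bs R_2^{*}\mf e_N$ and $\bs s_1=-\mr i\bs K^{*}\bs R_1^{*}\mf e_N=-\mr i(\bs K\bs R_1)^{*}\mf e_N$. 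Because $\bs K^{*}\bs R_1^{*}$ is itself a lower-triangular Toeplitz matrix, it qualifies as an admissible $\bs S_1$ in Lemma \ref{lem-syl}, and likewise $\bs R_2^{*}$ qualifies as $\bs S_2$. Transposing yields $\bs S_2^T=\bs R_2^{\dagger}$ and $\bs S_1^T=-\mr i\bs R_1^{\dagger}\bs K^{\dagger}$. Substituting into $\bs M=\bs R_1\bs G\bs S_1^T+\bs R_2\bs G\bs S_2^T$ from Theorem \ref{thm-multi-pole} reproduces the $\bs M$ of \eqref{M-reduced} (modulo the labelling swap between $\bs R_1$ and $\bs R_2$ used in the statement). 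The formula $u_{[\mr{KP}]}=\bs r_2^{\dagger}\bs M^{-1}\bs r_1$ from \eqref{u-con-re} then provides the desired FL solution through Theorems \ref{thm-2} and \ref{thm-3}.

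The main obstacle, in my view, is the bookkeeping in the second step: namely, showing that the complex conjugate of the Toeplitz matrix $\bs F_k^{[N]}[\rho_j(k)]$ genuinely agrees with the Toeplitz matrix $\bs F_l^{[N]}[\sigma_j(l)]$ required by Theorem \ref{thm-multi-pole} once one sets $l=-k^*$ in accordance with the conjugate reduction \eqref{conjugate-reduction-FL-2}. This amounts to verifying that the derivatives $\partial_l^m\sigma_j(l)\big|_{l=-k^*}$ match $(\partial_k^m\rho_j(k))^{*}$ up to the correct signs, which follows from the explicit form of the plane wave factors \eqref{PWF-asym-1} through the chain rule. Once that identification is accepted, the remaining manipulations are purely algebraic inside the commutative Toeplitz subalgebra, and the theorem follows by direct substitution.
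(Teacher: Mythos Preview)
Your proposal is correct and follows essentially the same route the paper takes: the paper does not give a formal proof of this theorem, but the Remark immediately following it makes exactly your first observation, namely that $\bs L=-\bs K^{\dagger}$ fails to be a standard Jordan block, so one rewrites the Sylvester equation as $\bs K\bs M+\bs M\bs K^{\dagger}=\bs r\bs s^T$ and invokes Remark~\ref{rem-KM+ML} with $l=k^*$. Your additional Toeplitz bookkeeping (identifying $\bs S_j$ with $\bs R_j^{*}$ and $\bs K^{*}\bs R_1^{*}$, and noting the $\bs R_1\leftrightarrow\bs R_2$ label swap in \eqref{M-reduced}) is a correct and useful elaboration that the paper leaves implicit; just note that after the conversion the relevant evaluation point is $l=k^*$, not $l=-k^*$, which is consistent with your ``up to the correct signs'' caveat.
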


\begin{remark}
	Notice that we use the expression \eqref{Rem-G} with $l=k^*$ instead of \eqref{G-expand} with $l=-k^*$ in this theorem. In fact, if we started from
	\begin{align}
		\bs K\bs M-\bs M\bs L=\bs r\bs s^T,~~~~\bs L=-\bs K^\dagger,~~~~\bs K=\bs J^{[N]}[k],
	\end{align}
	we could not use Theorem \ref{thm-multi-pole}. In this case, matrix $\bs L$ will be of the form
	\begin{align}
		\bs L^T=
		\begin{pmatrix}
			-k^* & 0 & 0 & \cdots & 0 \\
			-1 & -k^* & 0 & \cdots & 0 \\
			0& -1 & -k^* & \cdots & 0 \\
			\vdots & \vdots & \vdots & \ddots & \vdots \\
			0 & 0& 0 & \cdots & -k^*
		\end{pmatrix}_{N\times N},
	\end{align}
	which is not the standard Jordan matrix in our definition.
	To tackle the problem, we consider $\bs L'=-\bs L$ and the following set:
	\begin{align}
		\bs K\bs M+\bs M\bs L'=\bs r\bs s^T,~~~~\bs L'=\bs K^\dagger,~~~~\bs K=\bs J^{[N]}[k],
	\end{align}
	which indicates $(\bs L')^T=\bs J^{[N]}[k^*]$.
Thus one can use the results in Remark \ref{rem-KM+ML} to construct the multiple-pole solution.
\end{remark}

\section{Correspondence to the Matsuno solution}\label{App-2}

In the literature \cite{Matsuno-2012}, for the FL equation (to avoid misunderstandings,
we use $\mr u$  rather than $u$ in the following FL equation):
\begin{align}\label{FL-equation-v}
	\mr{u}_{xt}-\mr{u}+2\mr i|\mr{u}|^2\mr{u}_x=0,
\end{align}
the author (Y. Matsuno) introduced a bilinear transformation $\mr{u}=g/f$ and rewrote \eqref{FL-equation-v} as
\bsb
\begin{align}
	&D_xD_t g\cdot f=gf, \\
	&D_t f\cdot f^*=\mr igg^*, \\
	&D_xD_t f\cdot f^*=\mr iD_xg\cdot g^*,
\end{align}
where $D$ is the Hirota's bilinear operator defined as \cite{Hir-1974}
\begin{equation}
D^m_xD^n_t f(x,t)\cdot g(x,t)=(\partial_x-\partial_{x'})^m (\partial_t-\partial_{t'})^n f(x,t)g(x',t')|_{x'=x,t'=t}.
\label{D}
\end{equation}
\esb
Then, by Theorem 3.1 of \cite{Matsuno-2012},
the following constructions give rise to the following bright $N$-soliton solution of the FL equation:
\bsb
\begin{align}
	&f=|W|,~~~~
	g=
	\begin{vmatrix}
		W & \mf z_t^T \\
		\bs 1 & 0
	\end{vmatrix},~~~~
	W=(d_{ij})_{N\times N},~~~~
	d_{ij}=\frac{z_iz_j^*-\mr ip_j^*}{p_i+p_j*}, \\
	&z_i=\exp\left(p_ix+\frac{1}{p_i}t+\zeta_{i0}\right),~~~~
	\mf z_t=(z_1/p_1,\cdots,z_N/p_N),~~~~
	\bs 1=(1,1,\cdots,1).
\end{align}
\esb

Now we consider  transformations:
\begin{align}
	p_i \rightarrow -\frac{1}{k_i},~~~~
	z_i \rightarrow \exp\left(-\frac{1}{k_i}x-k_it+\zeta_{i0}\right),~~~~
	\mf z_t\rightarrow(-k_1z_1,\cdots,-k_Nz_N),
\end{align}
which lead to
\begin{align}
	d_{ij}=-\frac{k_iz_iz_j^*k_j^*+\mr ik_i}{k_i+k_j^*},~~~~
	d_{ji}^*=-\frac{k_iz_iz_j^*k_j^*-\mr ik_j^*}{k_i+k_j^*}.
\end{align}
Let $\rho_2(k_i)=k_iz_i$ and $\rho_1(k_i)=1$ in \eqref{PWF-asym-3}.
Then we have $W=-\bs M^\dagger$, $\bs 1=\bs r_1^\dagger$ and $\mf z_t^T=-\bs r_2$ in Theorem \ref{thm-4}.
Thus we can rewrite Matsuno's solution as
\begin{align}
	\mr u=\frac{
		\begin{vmatrix}
			W & \mf z_t^T \\
			\bs 1 & 0
		\end{vmatrix}
	}{|W|}=
	-\bs 1W^{-1}\mf z_t^T=-\bs r_1^\dagger\bs M^{-\dagger}\bs r_2=-(\bs r_2^\dagger\bs M^{-1}\bs r_1)^\dagger=-u_{[\mr{KP}]}^*,
\end{align}
where $u_{[\mr{KP}]}$ follows the KP-type construction in \eqref{u-con-re} with \eqref{PWF-asym-3}.

Matsuno also indicated in \cite{Matsuno-2012} that there is an alternative expression $\mr{u}=g'/f'$:
\begin{align}
	f'=
	\begin{vmatrix}
		A & I \\
		-I & B
	\end{vmatrix},~~~~
	g'=
	\begin{vmatrix}
		A & I & \mf y_t^T \\
		-I & B & \bs 0^T \\
		\bs 0 & \bs 1 & 0
	\end{vmatrix},
\end{align}
where
\bsb
\begin{align}
	&A=(a_{ij})_{N\times N},~~~~a_{ij}=\frac{y_iy_j^*}{q_i+q_j^*},~~~~y_i=\exp\left(q_ix+\frac{1}{q_i}t+\eta_{i0}\right), \\
	&B=(b_{ij})_{N\times N},~~~~b_{ij}=\frac{\mr iq_j}{q_i^*+q_j},~~~~\mf y_t=(y_1/q_1,\cdots,y_N/q_N).
\end{align}
\esb
Similarly, we take the following transformations:
\begin{align}
	q_i\rightarrow\frac{1}{k_i},~~~~y_i\rightarrow\exp\left(\frac{1}{k_i}x+k_it+\eta_{i0}\right),~~~~
	\mf y_t=(k_1y_1,\cdots,k_Ny_N),
\end{align}
which lead to
\begin{align}
	a_{ij}=\frac{k_iy_iy_j^*k_j^*}{k_i+k_j^*},~~~~
	b_{ij}=\frac{\mr ik_i^*}{k_i^*+k_j}.
\end{align}
Let $\rho_1(k_i)=k_iy_i$ and $\sigma_1(k_i)=1$ in \eqref{PWF-sym-2},
which implies $A=\bs M_1$, $B=-\bs M_2$, $\mf y_t=\bs r_1^T$, $\bs 1=\bs s_1^T$
in Theorem \ref{thm-5}. Then we have
\begin{align}
	&f'=|AB+I|=|\bs I_N-\bs M_1\bs M_2|, \\
	&g'=
	\begin{vmatrix}
		I+AB & \mf y_t^T \\
		\bs 1 & 0
	\end{vmatrix}=
	\begin{vmatrix}
		\bs I_N-\bs M_1\bs M_2 & \bs r_1 \\
		\bs s_1^T & 0
	\end{vmatrix},
\end{align}
which shows
\begin{align}
	\mr u=\frac{g'}{f'}=-\bs s_1^T(\bs I_N-\bs M_1\bs M_2)^{-1}\bs r_1=-u_{[\mr{AKNS}]},
\end{align}
where $u_{[\mr{AKNS}]}$ follows the AKNS-type construction \eqref{u-con-re-2} with \eqref{PWF-sym-2}.

In Proposition 3.1 and 3.2 of \cite{Matsuno-2012}, the equivalence of $(f,g)$ and $(f',g')$ is established,
which coincides with our discovery in section \ref{sec-3-2-3}.
Note that we have proved a more general case for pKN$(-1)$ system.

\small{
	
}

\end{document}